\let\classAND\AND
\let\AND\relax
\let\AND\classAND
	\newenvironment{proof}{\textbf{Proof:}}{\hfill  $\blacksquare$}
\newtheorem{theorem}{Theorem}
\newtheorem{lemma}{Lemma}
\newtheorem{definition}{Definition}
\newtheorem{remark}{Remark}
\newtheorem{proposition}{Proposition}
\newtheorem{assumption}{Assumption}
\newtheorem{example}{Example}
\newtheorem{update}{Update Rule}
\newtheorem{problem}{Problem}
\begin{document}
	
\begin{frontmatter}
	
	\title{Dynamic Leader-Follower Consensus with Adversaries: \\A Multi-Hop Relay Approach\thanksref{footnoteinfo}} 
	
	\thanks[footnoteinfo]{This work was supported in part by the National Natural Science
		Foundation of China under Grant 62403188 and in part by JSPS
		under Grants-in-Aid for Scientific Research Grant No. 22H01508 and
		24K00844. The material in this paper was not presented at any conference.}

	\author[Changsha]{Liwei Yuan}\ead{yuanliwei@hnu.edu.cn}, 
	\author[Tokyo]{Hideaki Ishii}\ead{hideaki\_ishii@ipc.i.u-tokyo.ac.jp}
	
	\address[Changsha]{College of Electrical and Information Engineering, Hunan University, Changsha 410082, China}
	\address[Tokyo]{Department of Information Physics and Computing, The University of Tokyo, Tokyo 113-8656, Japan}

	\begin{keyword}                           
		Leader-follower consensus; Multi-hop relays; Resilient protocols; Time-varying reference.
	\end{keyword}                             

	\begin{abstract}                          
		This paper examines resilient dynamic leader-follower consensus within multi-agent systems, where agents share first-order or second-order dynamics. The aim is to develop distributed protocols enabling nonfaulty/normal followers to accurately track a dynamic/time-varying reference value of the leader while they may receive misinformation from adversarial neighbors. Our methodologies employ the mean subsequence reduced algorithm with agents engaging with neighbors using multi-hop communication. We accordingly derive a necessary and sufficient graph condition for our algorithms to succeed; also, our tracking error bounds are smaller than that of the existing method. Furthermore, it is emphasized that even when agents do not use relays, our condition is tighter than the sufficient conditions in the literature. With multi-hop relays, we can further obtain more relaxed graph requirements. Finally, we present numerical examples to verify the effectiveness of our algorithms.
	\end{abstract}
	
\end{frontmatter}

\section{Introduction}

Amid growing concerns over cyber security in multi-agent systems (MASs), consensus protocols in the presence of adversarial agents creating failures and attacks have garnered significant interests, e.g.,
\cite{teixeira2012attack}.
Within this area, resilient consensus problems have gained substantial attention across the disciplines of systems control, distributed computing, and robotics (\cite{vaidya2012iterative,sundaram2018distributed,yu2022resilient}). Here, the objective for the nonfaulty, normal agents is to reach consensus despite misbehaviors of adversarial agents. 
Existing resilient consensus algorithms are designed to ensure that normal agents reach consensus on a value within the convex hull of their initial states, e.g.,
\cite{yuan2021secure,yuan2023event,yu2022resilient}. Meanwhile, numerous formation control and reliable broadcast problems require agents to reach consensus on a predetermined reference value, which may lie inside or outside that convex hull (\cite{bullo2009distributed}). Furthermore, in certain formation control applications where the reference value is time-varying (\cite{cao2009distributed,zegers2022event}), achieving consensus on this dynamic value over time is particularly advantageous; such applications include distributed target tracking performed by multiple unmanned aerial vehicles (UAVs) in outdoor environments (\cite{hassanalian2017classifications}). Therefore, this motivates our development of resilient algorithms for these requirements.

Previous works have examined fault-free leader-follower consensus, where followers agree on the reference value of a leader (\cite{dimarogonas2009leader}). Moreover, for MASs with adversaries, multiple works have investigated related problems of reliable broadcast (\cite{koo2006reliable}) and the certified propagation algorithm (CPA) (\cite{koo2004broadcast, tseng2015broadcast}); they aim for a secure leader to broadcast a reference value to all normal nodes in the network. 
Additionally, several works have investigated a related problem called resilient distributed estimation (e.g., \cite{leblanc2014resilient}). Relevantly, \cite{mitra2019byzantine} explored a problem in which observation information of a system is resiliently transmitted from a group of source nodes to other nodes that cannot directly observe the system. Overall, these reliable broadcast and distributed estimation methods are not applicable to tracking arbitrary dynamic values.

In this paper, we study resilient dynamic leader-follower consensus in a directed network with adversaries, where normal followers track the time-varying reference value of the leader.
Extensive works have utilized the so-called mean subsequence reduced (MSR) algorithms to tackle leaderless resilient consensus and have established tight conditions on graph structures for MSR algorithms to succeed, e.g., \cite{vaidya2012iterative,leblanc2013resilient}. Later, \cite{usevitch2020resilient} have proved a sufficient condition for the sliding weighted-MSR (SW-MSR) algorithm (\cite{saldana2017resilient}) to achieve resilient leader-follower consensus to arbitrary static reference values.
Subsequently, in our recent work (\cite{yuan2024reaching}), we proposed a necessary and sufficient condition for the same problem.
Note that it could be difficult to apply these methods for tracking a time-varying reference due to large tracking errors in this situation.
On the other hand, \cite{rezaee2021resiliency} studied resilient leader-follower consensus to dynamic reference values in directed networks with relatively small tracking errors. Moreover, \cite{zegers2022event} proposed a method for normal followers to detect adversarial neighbors and track a time-varying reference value of the trustworthy leader, but it is limited to undirected networks and requires an upper bound on the leader's value.
Recently, for leaderless resilient consensus, the introduction of multi-hop relays has relaxed the stringent graph connectivity requirements in \cite{khan2019exact,yuan2023event,yuan2021resilient}; it enables messages of an agent to be relayed to further destinations by intermediate agents.

The contributions of this paper are summarized as follows. 
First, we propose two novel algorithms based on the multi-hop weighted MSR (MW-MSR) algorithm (\cite{yuan2021resilient}) to tackle resilient dynamic leader-follower consensus in first-order and second-order MASs with directed topologies, respectively. We characterize a necessary and sufficient graph condition for our algorithms to succeed under adversarial Byzantine agents.
It is notable that even with one-hop communication, our condition is tighter than the ones in the related works with static reference values (\cite{usevitch2020resilient}) and dynamic reference values (\cite{rezaee2021resiliency}).
Besides, by introducing multi-hop relays, our method can increase the network robustness against adversaries without changing the topology. As a result, our approach can tolerate more adversarial nodes compared to the one-hop MSR algorithms (\cite{usevitch2020resilient,rezaee2021resiliency}) and the CPA works (\cite{koo2004broadcast,tseng2015broadcast}).

We emphasize that the extension to second-order MASs is crucial since for agents in robotics, second-order dynamics can describe their motions more precisely; see, e.g., \cite{paranjape2018robotic}. To the best of our knowledge, this case has not been investigated in the literature. 
It is worth noting that compared to \cite{rezaee2021resiliency}, our approaches have smaller consensus error bounds.
Besides, we allow the leader's value to be arbitrary as long as its velocity is bounded.
We also provide an analysis for resilient dynamic leader-follower consensus with insecure leaders, which is closely related to our main results. 
Lastly, numerical examples show our algorithms' potential for strengthening the security of formation control.

The rest of this paper is organized as follows. 
Section~2 outlines the problem settings. 
Section~3 defines our graph notion.
Section~4 analyzes the convergence of our method.
Section~5 presents another algorithm for second-order MASs.
Section~6 presents numerical examples to verify the efficacy of our algorithms.
Section~7 concludes the paper.

\section{Preliminaries and Problem Settings}

\subsection{Graph Notions}
Consider a digraph $\mathcal{G} = (\mathcal{V},\mathcal{E})$ consisting of the node set $\mathcal{V}=\{1,...,n-1,d\}$ and the edge set $\mathcal{E}$. The edge $(j,i)\in \mathcal{E}$ indicates that node $i$ can get information from node $j$.
The subgraph of $\mathcal{G}$ induced by the node set $\mathcal{H}\subset\mathcal{V}$ is $\mathcal{G}_\mathcal{H}=(\mathcal{H},\mathcal{E}(\mathcal{H}))$, where $\mathcal{E}(\mathcal{H})=\{(i,j)\in \mathcal{E}: i,j\in \mathcal{H}\}$.
An $l$-hop path ($l\in \mathbb{Z}_{>0}$) from source node $i_1$ to destination node $i_{l+1}$ is a sequence of distinct nodes $(i_1, i_2, \dots, i_{l+1})$, where $(i_j, i_{j+1})\in \mathcal{E} $ for $j=1, \dots, l$.
Let $\mathcal{N}_i^{l-}$ be the set of nodes that can reach node $i$ via paths of at most $l$ hops.
Let $\mathcal{N}_i^{l+}$ be the set of nodes that are reachable from node $i$ via paths of at most $l$ hops. Node $i$ is included in both sets above.
For the $l$-th power of $\mathcal{G}$, denoted by $\mathcal{G}^l$, it contains the node set $\mathcal{V}$ and its adjacency matrix $A = [a_{ij} ]$ is given by $\alpha \leq a_{ij}<1$ if $j\in \mathcal{N}_i^{l-}$ and $a_{ij} = 0$ otherwise, where $\alpha > 0$ is a constant. Denote by $|\mathcal{S}|$ the cardinality of a finite set $\mathcal{S}$.

In this paper, $\textup{sgn}(\cdot)$ denotes the sign function. For a fixed parameter $\epsilon \in \mathbb{R}_{>0}$, $\textup{sat}_\epsilon(\cdot)$ is the saturation function defined for a scalar $x$:
\begin{equation}\label{sat}
	\textup{sat}_\epsilon(x) = \left\{
	\begin{aligned} 
		1,&  & x > \epsilon,   \\
		x/\epsilon,&  & -\epsilon \leq  x \leq \epsilon,   \\
		-1,&   &x < -\epsilon.  
	\end{aligned}
	\right.
\end{equation} 
Next, we describe our communication model. 
We represent a message as a tuple $m=(w,P)$, where $w=\mathrm{value}(m)\in \mathbb{R}$ is the message content and $P=\mathrm{path}(m)$ indicates the path via which $m$ is transmitted. 
At time $k\geq 0$, each normal node $i$ exchanges the messages $m_{ij}[k]=(x_i[k],P_{ij}[k])$ consisting of its state $x_i[k]\in \mathbb{R}$ along each path $P_{ij}[k]$ with its multi-hop neighbor $j$ via the relaying process in \cite{yuan2021resilient}.
Denote by $\mathcal{V}(P)$ the set of nodes in $P$.

\subsection{System Model and Resilient Algorithm}\label{problemsetting}
Consider the leader-follower MAS modeled by the digraph $\mathcal{G} = (\mathcal{V},\mathcal{E})$, where $\mathcal{V}$ consists of one leader agent $d$ and the set of follower agents $\mathcal{W}$ with $d\notin \mathcal{W}$ and $\{d\}\cup \mathcal{W}=\mathcal{V}$. 
The leader $d$ propagates a dynamic reference value to followers in $\mathcal{W}$, and thereafter, followers achieve consensus on that reference value. However, if adversarial agents are present, they may prevent normal followers from achieving the objective.

In the literature (\cite{usevitch2020resilient,rezaee2021resiliency}), leader agents have been categorized into four cases depending on if they are secure or not, and if they are known to followers or not. In this paper, we focus on the case where leader agents are secure and known to direct neighbor followers. In this case, if all secure leaders broadcast the same reference value, they can be viewed as one secure leader;\footnote{When secure leaders broadcast different reference values, leader-follower consensus cannot be achieved for the whole MAS including all leaders as leaders usually are not affected by others during the consensus process; see also \cite{ren2007multi,usevitch2020resilient,rezaee2021resiliency}. } thus, in Assumption~\ref{secured}, we study a secure leader that is known to followers as \cite{rezaee2021resiliency} did.
Moreover, we will extend our analysis for the other three cases in Section~\ref{discussion_notsecured}.

Next, we describe our system under attacks. Denote the set of adversarial agents by $\mathcal{A}\subset \mathcal{W}$ and denote the set of normal agents by $\mathcal{N}=\mathcal{V}\setminus\mathcal{A}$. The formal definition of adversarial agents is given later. Then, the set of normal follower agents is denoted by $\mathcal{W}^\mathcal{N}=\mathcal{W}\cap \mathcal{N}$. Besides, $\{d\}\cup \mathcal{W}^\mathcal{N}=\mathcal{N}$ holds under Assumption~\ref{secured}. Denote the set of direct followers by $\mathcal{W}_d = \{ i\in \mathcal{W}: (d,i)\in \mathcal{E} \}$.

\begin{assumption}\label{secured}
	The leader agent $d$ is secure, i.e., $d\in \mathcal{N}$. Moreover, the direct followers in $ \mathcal{W}_d$ know the node index of the secure leader agent $d$.
\end{assumption}

At each time $k$, the leader $d$ updates its state $x_d[k] \in \mathbb{R}$ according to its dynamics with bounded inputs and propagates $x_d[k]$ to followers. The leader's dynamics will be given in Section~\ref{sec_first} for the first-order type and Section~\ref{sec_second} for the second-order type. We assume that the leader's value changing rate (or speed) is bounded as follows. Let $T$ be the sampling period of the system.
\begin{assumption}\label{leader-speed}
	The value changing rate $v_d[k]$ of leader $d$ is bounded, i.e., $|v_d[k]|=|x_d[k+1]-x_d[k]|/T \leq \overline{v}_d , \forall k\geq 0$, where  $\overline{v}_d$ is a positive constant and is known to all followers in $\mathcal{W}$.
\end{assumption}

Then, we define our first resilient dynamic leader-follower consensus problem in this paper, which is also studied in \cite{rezaee2021resiliency}.
\begin{problem}\label{problem}
	Design a distributed control strategy such that the normal agents in $\mathcal{N}$ reach resilient dynamic leader-follower consensus, i.e.,
	for any possible sets and behaviors of the adversaries in $\mathcal{A}$ and any state values of the normal agents in $\mathcal{N}$, the following condition is satisfied for a given error bound $\overline{\epsilon}>0$:
	\begin{equation}\label{reach_consensus}
		\exists \medspace \overline{k} < \infty \medspace\medspace \textup{s.t.} \medspace\medspace   |x_i[k]-x_d[k]|\leq \overline{\epsilon},  \forall i\in  \mathcal{N}, \forall k \geq \overline{k}.
	\end{equation} 
\end{problem}

\vspace{-2mm}
Next, we introduce the multi-hop weighted MSR (MW-MSR) algorithm from \cite{yuan2021resilient} in Algorithm~1, which is employed in our resilient dynamic leader-follower consensus protocols.
The notion of minimum message cover (MMC) is crucial in Algorithm~1.

\begin{definition} For a graph $\mathcal{G} = (\mathcal{V},\mathcal{E})$, let $\mathcal{M}$ be a set of messages transmitted through $\mathcal{G}$, and let $\mathcal{P}(\mathcal{M})$ be the set of message paths of all the messages in $\mathcal{M}$, i.e., $\mathcal{P}(\mathcal{M}) =\{\mathrm{path}(m):m \in \mathcal{M}\}$. A \textit{message cover} of $\mathcal{M}$ is a set of nodes $\mathcal{T}(\mathcal{M})\subset \mathcal{V}$ whose removal disconnects all message paths, i.e., for each path $P\in \mathcal{P}(\mathcal{M})$, we have $\mathcal{V}(P)\cap \mathcal{T}(\mathcal{M})\neq \emptyset$. In particular, an MMC of $\mathcal{M}$ is defined by
	\begin{equation*}
		\mathcal{T}^*(\mathcal{M})\in	\arg \min_{\substack{ \mathcal{T}(\mathcal{M}): \textup{ Cover of } \mathcal{M}}} 	\left|  \mathcal{T} (\mathcal{M})\right| . 
	\end{equation*}
\end{definition}

\vspace{-2mm}
In Algorithm~1, normal node $i$ can remove the largest and smallest values from exactly $f$ nodes located no more than $l$ hops away. With multi-hop relays, it requires the MMC to determine the number of such extreme values. See \cite{yuan2021resilient} for more details of Algorithm~1.
As we will present later, Algorithm~1 is able to enhance the network robustness using small relay hops $l$, where one can calculate the MMC in polynomial time. Hence, compared to the one-hop algorithms (\cite{usevitch2020resilient,rezaee2021resiliency}), Algorithm~1 is slightly more computationally complex.

\begin{algorithm}[t] 
	\caption{MW-MSR Algorithm} 
	\begin{algorithmic}
		\REQUIRE Node $i$ knows $x_i[0]$, $\mathcal{N}_i^{l-}$, $\mathcal{N}_i^{l+}$.
		
		\STATE At each time $k$, for $\forall i \in \mathcal{N}$:
		
		\STATE \textbf{1)} Exchange messages:
		
		\STATE Send $m_{ij}[k]=(x_i[k],P_{ij}[k])$ to $\forall j\in \mathcal{N}_i^{l+}$. 
		
		\STATE Receive $m_{ji}[k]=(x_j[k],P_{ji}[k])$ from $\forall j\in \mathcal{N}_i^{l-}$ and store them in the set $\mathcal{M}_i[k]$.
		
		\STATE Sort $\mathcal{M}_i[k]$ in an increasing order based on the message values (i.e., $x_j[k]$ in $m_{ji}[k]$).
		
		\STATE \textbf{2)} Remove extreme values:

		\STATE (a) Define two subsets of $\mathcal{M}_i[k]$:
		\vspace{1mm}
		
		\STATE \hspace{2.2mm} $\overline{\mathcal{M}}_i[k]=\{ m\in \mathcal{M}_i[k]: \mathrm{value}(m)> x_i[k]  \}$,
		\vspace{1mm}
		
		\STATE \hspace{2.2mm} $\underline{\mathcal{M}}_i[k]=\{ m\in \mathcal{M}_i[k]: \mathrm{value}(m)< x_i[k]  \}$.
		\vspace{1mm}
		
		\STATE (b) Get $\overline{\mathcal{R}}_i[k]$ from $\overline{\mathcal{M}}_i[k]$:

		\IF{$\left|  \mathcal{T}^* (\overline{\mathcal{M}}_i[k])\right| <f$}
		\STATE $\overline{\mathcal{R}}_i[k] = \overline{\mathcal{M}}_i[k]$;
		\ELSE
		\STATE Choose $\overline{\mathcal{R}}_i[k]$ s.t. (i)
		$\forall m\in \overline{\mathcal{M}}_i[k]\setminus \overline{\mathcal{R}}_i[k]$, $\forall m'\in \overline{\mathcal{R}}_i[k]$, $\mathrm{value}(m) \leq \mathrm{value}(m')  \medspace\medspace \medspace \medspace  $ 
		and (ii) $\left|  \mathcal{T}^* (\overline{\mathcal{R}}_i[k])\right| =f$. 
		\ENDIF
		
		\STATE (c) Get $\underline{\mathcal{R}}_i[k]$ from $\underline{\mathcal{M}}_i[k]$ similarly, which contains smallest message values.
		
		\STATE (d) $\mathcal{R}_i[k]=\overline{\mathcal{R}}_i[k]\cup\underline{\mathcal{R}}_i[k]$.

		\STATE \textbf{3)} Update:
		\vspace{-4mm}
					\begin{align}\label{msrupdate}
							\phi_i[k] = &\sum_{m \in \mathcal{M}_i[k]  \setminus \mathcal{R}_i[k]}   \big(x_i[k]   - \mathrm{value}(m )  \big).
					\end{align}
		\vspace{-5mm}
		\ENSURE $\phi_i[k]$.
	\end{algorithmic}
\end{algorithm}

\subsection{Adversary Model}

We introduce our adversary models (\cite{vaidya2012iterative,leblanc2013resilient,yuan2021resilient}).
\begin{definition}
	\textit{($f$-local set)}
	The adversary set $\mathcal{A}$ is said to be $f$-local (in $l$-hop neighbors)
	if $\left|\mathcal{N}_i^{l-} \cap \mathcal{A}\right| \leq f, \forall i\in \mathcal{N}$.
\end{definition}

\begin{definition}
	\textit{(Byzantine nodes)}
	An adversary node $i\in \mathcal{A}$ is said to be Byzantine
	if it arbitrarily changes its state/relayed values and sends different state/relayed values to its neighbors at each time.
\end{definition}

Byzantine agents are often used to characterize misbehaviors of adversarial agents in point-to-point networks (\cite{Lynch,vaidya2012iterative}). In contrast, the so-called \textit{malicious} agents are limited to send the same false information to neighbors and they form a suitable model for broadcast networks (\cite{goldsmith2005wireless}).

\begin{assumption}\label{assum_topology}
	Each normal node $i\in \mathcal{N}$ knows the $f$-local adversary model and the topology information of its neighbors up to $l$ hops, i.e., the paths from/to any neighbor $j\in \mathcal{N}_i^{l-} \cup \mathcal{N}_i^{l+}$.
\end{assumption}

	As is standard in \cite{Lynch, su2017reaching, yuan2021resilient}, we assume that normal nodes know the upper limit on the number of adversaries and neighbors' topology information.
	Note that we consider finite (typically small) relay hops $l$, and thus, Assumption~\ref{assum_topology} is more relaxed than the ones requiring each normal node to know the entire network topology in, e.g., \cite{tseng2015broadcast,khan2019exact}.

Lastly, we introduce Assumption~\ref{assumption_path} from \cite{su2017reaching,yuan2021resilient} merely for ease of analysis. In fact, manipulating message paths can be easily detected and such detection can be handled parallel to the case of manipulating message values. Related discussions have been presented in \cite{yuan2021resilient}.

\begin{assumption}\label{assumption_path}
	Each node $i\in \mathcal{A}$ can manipulate its state $x_i[k]$ and the values in messages that they send or relay, but cannot change the path $P$ in such messages. 
\end{assumption}

\section{Robust Following Graphs}\label{sec_robustness}

In this section, we present the notion of robust following graphs enabling our algorithms to achieve resilient dynamic leader-follower consensus in directed networks.

	We start with the definition of $r$-reachable followers; it characterizes the local graph structure for a node $i\in \mathcal{S}$ to be affected by the normal nodes outside $\mathcal{S}$ when node $i$ applies Algorithm~1.

\begin{definition}\label{reachability}
	\textit{($r$-reachable followers)}
	Consider the graph $\mathcal{G} = (\mathcal{V},\mathcal{E})$ with $l$-hop communication. For $r\in \mathbb{Z}_{>0}$ and a nonempty set $\mathcal{S}\subset \mathcal{V}$, we say that a node $i\in \mathcal{S}$ is an $r$-reachable follower with $l$ hops if it holds that
	\begin{equation*}
		|\mathcal{I}_{i, \mathcal{S}}^l| \geq r, 
	\end{equation*}
	where $\mathcal{I}_{i, \mathcal{S}}^l$ is the set of independent paths\footnote{Note that only node $i$ is common in these paths.} to node $i$ of at most $l$ hops originating from nodes outside $\mathcal{S}$.
\end{definition}

Then, we are ready to define $r$-robust following graphs with $l$ hops as follows. Since normal direct followers know the node index of the secure leader, we need to focus only on the graph structure excluding the leader.

\begin{definition}\label{robust_following}
	\textit{($r$-robust following graphs)}
	Consider the graph $\mathcal{G} = (\mathcal{V},\mathcal{E})$ with a secure leader $d\in \mathcal{V}$ and a set of followers $\mathcal{W}= \mathcal{V}\setminus \{d\}$. Let set $\mathcal{F} \subset \mathcal{W}$ and denote by $\mathcal{G}_{\mathcal{H}}$ the subgraph of $\mathcal{G}$ induced by node set $\mathcal{H}=\mathcal{W}\setminus\mathcal{F}$.
	$\mathcal{G}$ is said to be an $r$-robust following graph with $l$ hops (under the $f$-local model) if for any $f$-local set $\mathcal{F}$, the subgraph $\mathcal{G}_{\mathcal{H}}$ satisfies that for every nonempty subset $\mathcal{S}\subseteq \mathcal{H}\setminus \mathcal{W}_d$, the following condition holds:
	\begin{align}\label{zrs}
		| \mathcal{Z}_{\mathcal{S}}^{r}| \geq 1  ,
	\end{align}
	where $ \mathcal{Z}_{\mathcal{S}}^{r}
	= \{i\in \mathcal{S} :  |\mathcal{I}_{i, \mathcal{S}}^l| \geq r \}$.
	Moreover, if $\mathcal{W}_d=\mathcal{W}$, we also say $\mathcal{G}$ is an $r$-robust following graph with $l$ hops, where $r\leq |\mathcal{W}|=n-1$. 
\end{definition}

We emphasize that for Definition~\ref{robust_following}, the graph robustness increases as the relay range $l$ increases; this is because $|\mathcal{I}_{i, \mathcal{S}}^l|$ increases as $l$ grows.
Next, we illustrate this idea using the graphs in Figs.~\ref{9node} and \ref{15node}. 

\begin{example}\label{discussion9node}
	Consider the graph in Fig.~\ref{9node}(a).
	It is not a $2$-robust following graph with $1$ hop under the $1$-local model. For example, after removing the node set $\mathcal{F}=\{5\}$, in the subgraph $\mathcal{G}_{\mathcal{H}}$, the set $\mathcal{S}=\{1,2,3,6\}$ does not satisfy \eqref{zrs}; i.e., $\mathcal{Z}_{\mathcal{S}}^{2}=\emptyset$ when $l=1$. 
	In fact, for this network to be a $2$-robust following graph with $1$ hop, four edges should be added as shown in Fig.~\ref{9node}(b). Alternatively, we could increase the network robustness by increasing the relay range. For example, when $l=2$, for node sets $\mathcal{F}=\{5\}$ and $\mathcal{S}=\{1,2,3,6\}$, node 2 has 2 independent paths originating from nodes outside $\mathcal{S}$, i.e., \eqref{zrs} is met.
	Lastly, this graph can be verified to be a $2$-robust following graph with $2$ hops under the $1$-local model; it needs to check all the combinations of node subsets.
\end{example}

\begin{figure}[t]
	\centering
	\subfigure[\scriptsize{ }]{
		\includegraphics[width=0.29\linewidth ]{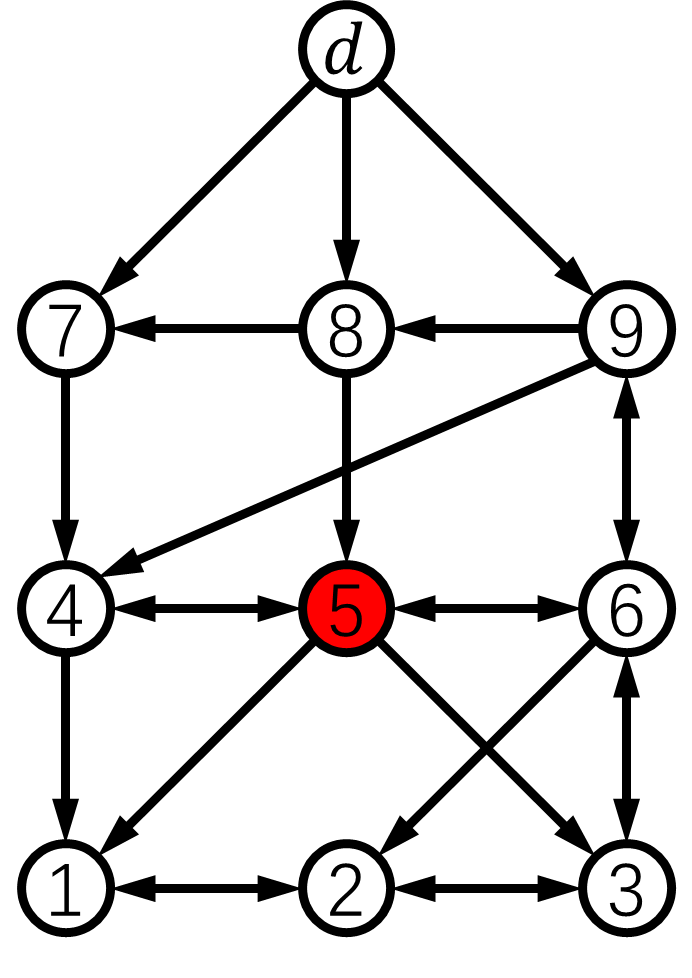}
	}
	\quad\quad
	\subfigure[\scriptsize{ }]{
		\includegraphics[width=0.29\linewidth ]{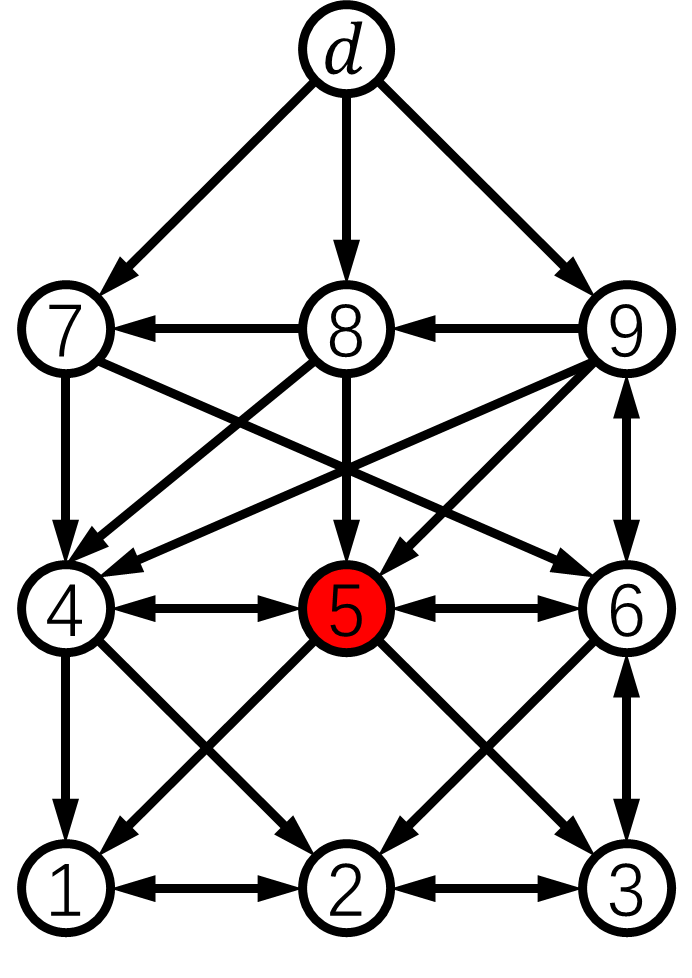}
	}
	\vspace{-9pt}
	\caption{(a) The graph is not a 2-robust following graph with 1 hop but is a 2-robust following graph with 2 hops under the 1-local model. (b) The graph is a 2-robust following graph with 1 hop under the 1-local model.}
	\label{9node}
\end{figure}

\begin{example}\label{discussion15node}
	Consider the larger graph in Fig.~\ref{15node}.
	It is not a $3$-robust following graph with $1$ hop under the $2$-local model.
	Notice that by selecting the set $\mathcal{F}=\{8,9\}$, in the subgraph $\mathcal{G}_{\mathcal{H}}$ with $\mathcal{H}= \mathcal{W}\setminus\mathcal{F}$, the set $\mathcal{S}=\mathcal{H}\setminus \mathcal{W}_d$ does not meet \eqref{zrs} when $l=1$ since none of the nodes has 3 in-neighbors outside $\mathcal{S}$.
	Meanwhile, when $l=3$, for the same sets $\mathcal{F}$ and $\mathcal{S}$, \eqref{zrs} is satisfied as node 2 has 3 independent paths originating from nodes outside $\mathcal{S}$.
	By checking all the combinations, we can conclude that it is a $3$-robust following graph with $3$ hops under the $2$-local model.
\end{example}

\begin{remark}\label{remark_leaderless}
	We compare the notion of robust following graphs with that of robustness with $l$ hops, which is the basis of tight graph conditions guaranteeing leaderless resilient consensus under the malicious model (\cite{leblanc2013resilient,yuan2021resilient}) and the Byzantine model (\cite{yuan2023event}). The notion of $r$-robustness with $l$ hops requires that for any two nonempty disjoint node sets $\mathcal{V}_1, \mathcal{V}_2 \subset \mathcal{V}$, at least one set includes an $r$-reachable node (similar to Definition~\ref{reachability}). 
	In contrast, robust following graphs are defined using one set $\mathcal{S}\subseteq \mathcal{H}\setminus \mathcal{W}_d$. This distinction arises from the nature of the two problems. Specifically, leaderless consensus aims at reaching consensus on a value that is not predetermined. Thus, the normal nodes in $\mathcal{V}_1$ (or $\mathcal{V}_2$) either influence or are influenced by those outside the set. Hence, two node sets are necessary for characterizing such potentially bidirectional information flows. 
	However, the leader-follower case requires followers to track the leader's value. Consequently, it is necessary that the followers in each set $\mathcal{S}$ are influenced by the normal nodes outside $\mathcal{S}$ via sufficient independent paths.
\end{remark}

\begin{figure}[t]
	\centering
	\includegraphics[width=0.56\linewidth ]{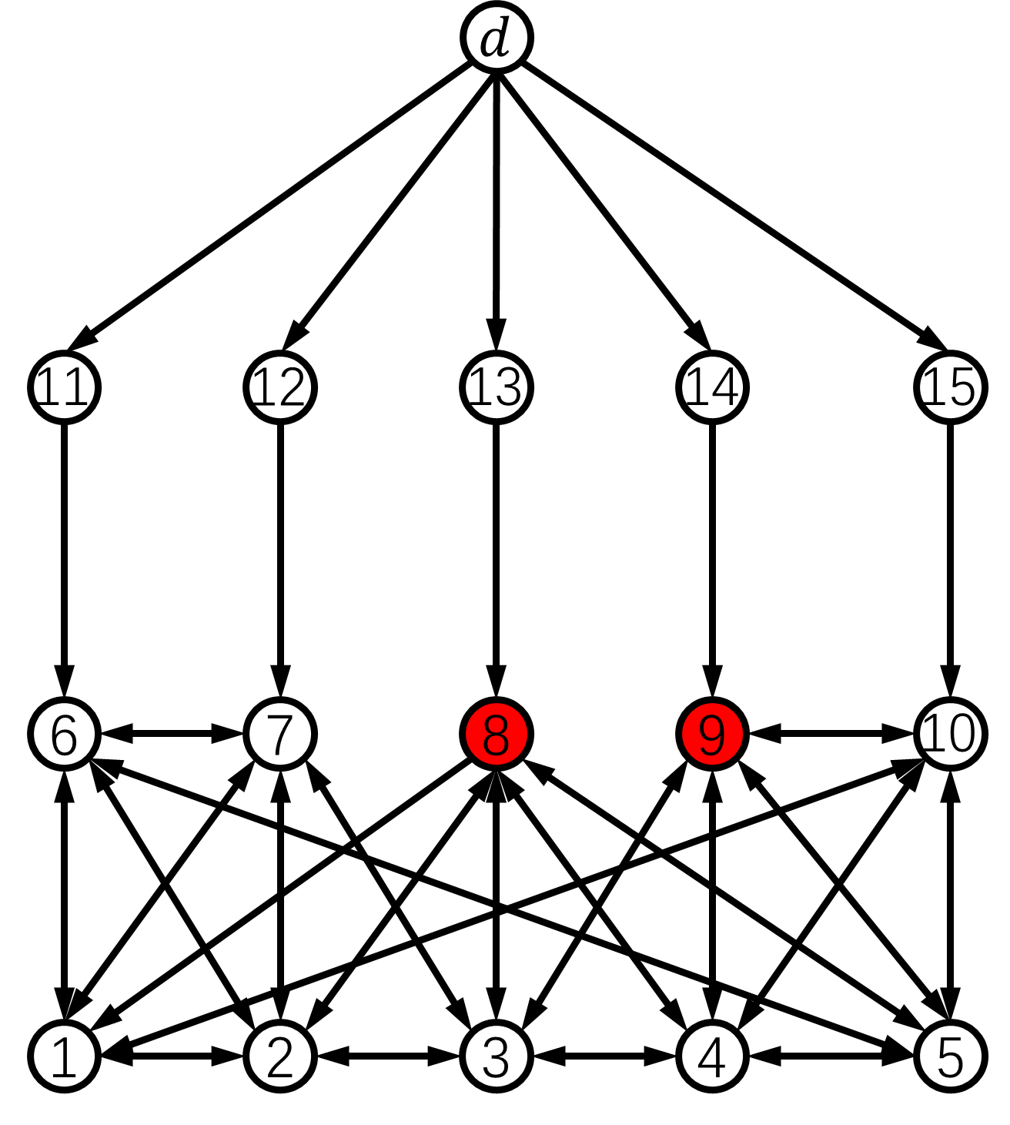}
	\vspace{-9pt}
	\caption{The graph is not a 3-robust following graph with 1 hop but is a 3-robust following graph with 3 hops under the 2-local model.}
	\label{15node}
\end{figure}

Recently, \cite{zegers2022event} studied dynamic leader-follower consensus. Their approach requires that the undirected normal network\footnote{For a network $\mathcal{G} = (\mathcal{V}, \mathcal{E})$, the normal network is the subgraph induced by normal nodes, denoted by $\mathcal{G}_{\mathcal{N}} = (\mathcal{N},\mathcal{E}_{\mathcal{N}})$.} is connected.
It is worth noting that we can obtain a tighter graph condition defined on the normal network for our leader-follower problem; the condition is that $\mathcal{G}_{\mathcal{N}}$ satisfies the property presented in Definition~\ref{robust_following} instead of $\mathcal{G}_{\mathcal{H}}$ there. In fact, if $\mathcal{G}$ is an $r$-robust following graph with $l$ hops under the $f$-local model, then $\mathcal{G}_{\mathcal{N}}$ is guaranteed to satisfy that property. However, the graph condition on the normal network cannot be verified prior to the deployment of the algorithm. Thus, we define our condition on the original network topology $\mathcal{G}$ as in the works (\cite{vaidya2012iterative,su2017reaching,yuan2021resilient}).

We further note that our results can be extended to the following two situations: (i) The edge set $\mathcal{E}$ of $\mathcal{G}$ varies in finite time; (ii) the $f$-local adversary set $\mathcal{A}$ varies in finite time (e.g., an agent may be attacked for a finite time). For example, a sufficient condition for handling the two cases is that $\mathcal{G}$ satisfies Definition~\ref{robust_following} at each time and $\mathcal{G}$ is fixed when $k\to \infty$. This condition also guarantees that $\mathcal{G}_{\mathcal{N}}$ satisfies the property in Definition~\ref{robust_following}.

\section{Resilient Dynamic Leader-Follower Consensus in First-Order MASs}\label{sec_first}

In this section, we consider first-order MASs with agents in $\mathcal{V}$ having first-order dynamics.

We denote by $e_i[k] = x_i[k] -x_d[k]$ the leader-follower consensus error of agent $i \in \mathcal{V}$ at time $k$. Then, the objective in Problem~\ref{problem} is transformed into developing a resilient control strategy such that $\forall i\in\mathcal{W}^\mathcal{N}$, $e_i[k]$ converges toward zero with ultimately bounded errors.

We introduce our first-order system model under attacks.
The leader $d$ propagates its value $x_d[k]$ to the direct followers in $\mathcal{W}_d$ at each time $k$, which is updated as
\begin{align}\label{leader1}
	 x_d[k+1] = x_d[k] + T u_d [k] ,
\end{align}
where $u_d[k]=v_d[k]$ is the bounded input with known bound (in Assumption~\ref{leader-speed}) at time $k$.

Each follower agent $i\in\mathcal{W}$ applies a control rule consisting of its neighbors’ values to make updates by
\begin{align}\label{system1}
	 x_i[k+1] = x_i[k] + T u_i [k] + T \sigma_i [k],
\end{align} 
where $x_i [k]$ and $\sigma_i [k]$ represent, respectively, the state and the bounded external disturbance with known bound $\overline{\sigma}_i$ as $|\sigma_i [k]| \leq \overline{\sigma}_i$ at each time $k$. 
Moreover, the control input $u_i [k]$ is given as
\begin{align}\label{input1}
	u_i [k] = -\gamma_i[k] \phi_i [k] - \theta_i \textup{sat}_\epsilon(\phi_i [k] ), \medspace\medspace \forall i \in \mathcal{W}^\mathcal{N}.
\end{align}
Here, $\gamma_i[k]= \alpha_i / \sum_{j =1}^{n} a_{ij}[k]$, and $\theta_i, \epsilon  \in \mathbb{R}_{>0}$. Moreover, $\phi_i [k]$ is from \eqref{msrupdate} in Algorithm~1 and is rewritten as
\begin{align}\label{phiik2}
	\phi_i [k] = \sum_{j =1}^{n} a_{ij}[k] (x_i[k] - x_j[k]),
\end{align}
where $a_{ij}[k] > 0$ if $m_{ji}[k]\in \mathcal{M}_i[k]\setminus \mathcal{R}_i[k]$, and $a_{ij}[k] = 0$  otherwise.
Note that the parameter $\theta_i$ in \eqref{input1} is designed to compensate the velocity of the leader. If the control input does not contain such a term (e.g., \cite{usevitch2020resilient,yuan2024reaching}), it could be difficult for followers to track the leader's dynamic value, i.e., the consensus error would be large. See \cite{ren2007multi} for related discussions on the performance of fault-free consensus protocols without the term for compensating leader's velocity.
However, the control inputs of adversary nodes in $\mathcal{A}$ are arbitrary and may deviate from \eqref{input1}.

For the parameters in \eqref{input1} and $T$, we assume 
\begin{align}
	& \overline{v}_d + \overline{\sigma}_i  < \theta_i , \label{theta} \\ 
	& \overline{v}_d + \overline{\sigma}_i  + \theta_i \leq \frac{2-2\alpha_i T}{T} \epsilon .  \label{theta_upper}
\end{align}

\begin{update}\label{updaterule1}
	At each time $k\geq 0$, each normal direct follower $j\in \mathcal{W}_d \cap \mathcal{N}$ updates its value as $\phi_j [k] = x_j[k] - x_d[k]$ and then follows \eqref{system1} with \eqref{input1}.
	Moreover, each normal non-direct follower $i\in \mathcal{W}^\mathcal{N} \setminus \mathcal{W}_d$ updates its $\phi_i [k]$ according to Algorithm~1, and follows \eqref{system1} with \eqref{input1}.
\end{update}

\subsection{Convergence Analysis}

The following theorem states a necessary and sufficient condition for directed networks using Update Rule~\ref{updaterule1} to achieve resilient dynamic leader-follower consensus.

Let $N =|\mathcal{W}^\mathcal{N}|$. The consensus error bound is given by
\begin{align}\label{errorbound}
	\overline{\epsilon} = \sum_{h =1}^{N} \epsilon_h, 
\end{align}
where $\epsilon_1=\epsilon$ and $\epsilon_h=\epsilon + \sum_{q =1}^{h-1} q \epsilon_q, h=2,\dots, N$. Equivalently, $\epsilon_h=h \epsilon_{h-1}, h=2,\dots, N$.


\begin{theorem}\label{theorem_firstorder}
	Consider the network $\mathcal{G} = (\mathcal{V},\mathcal{E})$ with $l$-hop communication, where each normal follower node $i\in \mathcal{W}^\mathcal{N}$ updates its value according to Update Rule~\ref{updaterule1}. Under Assumptions~\ref{secured}--\ref{assumption_path} and the $f$-local Byzantine set $\mathcal{A}$, resilient dynamic leader-follower consensus is achieved as in \eqref{reach_consensus} with consensus error bound $\overline{\epsilon}$ in \eqref{errorbound} if and only if $\mathcal{G}$ is an $(f+1)$-robust following graph with $l$ hops. 
\end{theorem}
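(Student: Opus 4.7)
The plan is to prove necessity by constructing a blocking adversarial scenario and sufficiency by induction over a sequence of shrinking error layers $\epsilon_h$ satisfying $\epsilon_1=\epsilon$ and $\epsilon_{h+1}=(h+1)\epsilon_h$.

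\textbf{Necessity.} I would argue contrapositively. Suppose $\mathcal{G}$ is not an $(f+1)$-robust following graph with $l$ hops. Then there exist an $f$-local set $\mathcal{F}$ and a nonempty $\mathcal{S}\subseteq \mathcal{H}\setminus \mathcal{W}_d$ (with $\mathcal{H}=\mathcal{W}\setminus\mathcal{F}$) such that every $i\in\mathcal{S}$ has $|\mathcal{I}_{i,\mathcal{S}}^{l}|\leq f$ in $\mathcal{G}_{\mathcal{H}}$. Take $\mathcal{A}=\mathcal{F}$, have every Byzantine always relay a fixed $v^{\star}$ with $|v^{\star}-x_d[0]|>\overline{\epsilon}$, initialize every $i\in\mathcal{S}$ at $v^{\star}$, and start every node outside $\mathcal{S}$ near $x_d[0]$. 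For each $i\in\mathcal{S}$, the Menger-type bound on the $l$-hop paths in $\mathcal{G}_{\mathcal{H}}$ implies that the MMC of the outside-of-$\mathcal{S}$ clean messages arriving at $i$ is at most $f$, so Algorithm~1 discards them all as extreme values. The surviving messages in $\mathcal{M}_i[k]\setminus\mathcal{R}_i[k]$ all carry $v^{\star}$, hence $\phi_i[k]=0$ and $x_i[k]\equiv v^{\star}$ for all $k$, violating \eqref{reach_consensus}.

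\textbf{Sufficiency.} For each direct normal follower $j\in\mathcal{W}_d\cap\mathcal{N}$, Update Rule~\ref{updaterule1} gives $\phi_j[k]=e_j[k]$, so the error dynamics become
\begin{equation*}
e_j[k+1]=(1-T\gamma_j)\,e_j[k]-T\theta_j\,\textup{sat}_\epsilon(e_j[k])+T\bigl(\sigma_j[k]-v_d[k]\bigr).
\end{equation*}
Using \eqref{theta} to obtain a strict contraction of $|e_j[k]|$ outside $[-\epsilon,\epsilon]$ and \eqref{theta_upper} to rule out overshoot through the saturation region, there is a finite $k_1$ after which every direct normal follower satisfies $|e_j[k]|\leq\epsilon_1$. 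I then induct on $h$: assume after time $k_h$ a set $\mathcal{E}_h\supseteq\{d\}\cup(\mathcal{W}_d\cap\mathcal{N})$ satisfies $|e_i[k]|\leq\epsilon_h$ for all $k\geq k_h$ and $i\in\mathcal{E}_h$. If $\mathcal{W}^\mathcal{N}\subseteq\mathcal{E}_h$, we are done; otherwise $\mathcal{S}=\mathcal{W}^\mathcal{N}\setminus\mathcal{E}_h$ is a nonempty subset of $(\mathcal{W}\setminus\mathcal{A})\setminus\mathcal{W}_d$, and applying Definition~\ref{robust_following} to $\mathcal{G}_{\mathcal{W}\setminus\mathcal{A}}$ yields some $i^{\star}\in\mathcal{S}$ with $|\mathcal{I}_{i^{\star},\mathcal{S}}^{l}|\geq f+1$. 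Because $|\mathcal{N}_{i^{\star}}^{l-}\cap\mathcal{A}|\leq f$, the MMC argument behind Algorithm~1 from \cite{yuan2021resilient} guarantees at least one multi-hop message value $x_{j^{\star}}[k]$ with $j^{\star}\in\mathcal{E}_h$ that survives the extreme-value filter, i.e., $x_{j^{\star}}[k]\in[x_d[k]-\epsilon_h,\,x_d[k]+\epsilon_h]$. Treating $\phi_{i^{\star}}[k]$ as a weighted combination that contains this anchor together with surviving terms bracketed by the retained extremes, and using the $\theta_{i^{\star}}$ saturation in \eqref{input1} to compensate $v_d[k]$, I obtain that $|e_{i^{\star}}[k]|$ enters $[-\epsilon_{h+1},\epsilon_{h+1}]$ in finite time, promoting $i^{\star}$ to $\mathcal{E}_{h+1}$. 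Iterating at most $N=|\mathcal{W}^\mathcal{N}|$ times exhausts $\mathcal{W}^\mathcal{N}$, yielding the cumulative bound $\overline{\epsilon}=\sum_{h=1}^{N}\epsilon_h$ from \eqref{errorbound}.

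\textbf{Main obstacle.} The technically hardest step is the per-layer contraction in the induction: other surviving neighbors of $i^{\star}$ may still lie in $\mathcal{S}$ with errors larger than $\epsilon_h$, so the single guaranteed anchor $j^{\star}$ must dominate their adverse pull through the combined weighted MSR plus saturation structure of \eqref{input1}. Establishing the precise recursion $\epsilon_{h+1}=(h+1)\epsilon_h$ and a finite per-layer convergence time will require careful tracking of the time-varying weights $a_{ij}[k]$ and their lower bound $\alpha$, a Lyapunov-style analysis that handles both regimes of $\textup{sat}_\epsilon(\cdot)$, and a summation of the finite per-layer times across at most $N$ layers to produce a single finite $\overline{k}$ as in \eqref{reach_consensus}.
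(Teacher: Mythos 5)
Your necessity argument is essentially the paper's: take $\mathcal{A}=\mathcal{F}$, pin the nodes of the deficient set $\mathcal{S}$ at a constant $v^\star$, and use $|\mathcal{I}_{i,\mathcal{S}}^l|\le f$ to show that the MMC of the clean messages from outside $\mathcal{S}$ is at most $f$, so Algorithm~1 discards them all. That part is sound.

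The sufficiency proof has a genuine gap, and it sits exactly where you flagged your ``main obstacle.'' You run an induction on a fixed set of node identities $\mathcal{E}_h$ and promote one node $i^\star$ per layer. Two things break. First, the invariant ``$|e_i[k]|\le\epsilon_h$ for all $k\ge k_h$ and all $i\in\mathcal{E}_h$'' is not maintained: a non-direct follower already in $\mathcal{E}_h$ still filters messages that may include normal nodes of $\mathcal{S}$ with large errors, so it can be dragged back out after $k_h$. Second, and more fundamentally, for the promoted node $i^\star$ the quantity $\phi_{i^\star}[k]=\sum_j a_{i^\star j}[k](e_{i^\star}[k]-e_j[k])$ is a sum of terms of both signs; the single guaranteed anchor $j^\star\in\mathcal{E}_h$ does not dominate the pull of the other surviving in-$\mathcal{S}$ neighbors, so you cannot conclude $|\phi_{i^\star}[k]|>\epsilon$ with the correct sign, which is precisely what the saturation term in \eqref{input1} needs in order to contract $e_{i^\star}$. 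The paper avoids both problems by arguing on order statistics rather than identities: it sorts the normal errors as $\delta_1[k]\le\cdots\le\delta_{N+1}[k]$, defines the sets $\overline{\mathcal{X}}_h(k)$ and $\underline{\mathcal{X}}_h(k)$ by rank, and applies the robustness condition to the current extremal set. For a node attaining the extreme error, every surviving relative error $e_{i}[k]-e_j[k]$ has one sign (the MSR filter removes all adversarial values outside $[\delta_1[k],\delta_{N+1}[k]]$), so the guaranteed outside value forces $\phi_i[k]>\epsilon$ (resp.\ $<-\epsilon$); a Lyapunov argument with $V_i=e_i^2$ together with \eqref{theta} and \eqref{theta_upper} then yields the finite-time gap bounds $\delta_{N+2-h}[k]-\delta_{N+1-h}[k]\le\epsilon_h$ and $\delta_{h+1}[k]-\delta_h[k]\le\epsilon_h$, and summing the gaps until the leader's rank is reached both produces $\overline{\epsilon}=\sum_{h=1}^N\epsilon_h$ and derives, rather than asserts, the recursion $\epsilon_h=h\epsilon_{h-1}$. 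To repair your proof you would need to replace the identity-based sets $\mathcal{E}_h$ with these rank-based sets.
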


\begin{proof}
	\textit{(Necessity)} If $\mathcal{G}$ is not an $(f+1)$-robust following graph with $l$ hops, then by Definition~\ref{robust_following}, 
	there exists an $f$-local set $\mathcal{F}$ such that $\mathcal{G}_{\mathcal{H}}$ does not satisfy the condition there. Suppose that $\mathcal{F}$ is exactly the set of Byzantine agents $\mathcal{A}$. Then in the normal network $\mathcal{G}_{\mathcal{N}} = (\mathcal{N},\mathcal{E}_{\mathcal{N}})$, there must be a nonempty subset $\mathcal{S}\subseteq \mathcal{W}^\mathcal{N} \setminus \mathcal{W}_d$ such that $\mathcal{Z}_{\mathcal{S}}^{f+1} = \emptyset$. It further means that
	\begin{align}\label{atmostf}
		|\mathcal{I}_{i, \mathcal{S}}^l| \leq f , \medspace \forall i\in \mathcal{S}.
	\end{align} 
	Suppose that $x_i[0]=a , \medspace \forall i\in \mathcal{S}$, and $x_j[0]=x_d[0], \medspace \forall j\in \mathcal{N}\setminus \mathcal{S}$, where $a< x_d[0]$ is a constant and the leader $d$ increases its value $x_d[k]$ as time evolves. Moreover, suppose that all Byzantine nodes send $a$ and $x_d[k]$ to the nodes in $\mathcal{S}$ and $ \mathcal{N}\setminus \mathcal{S}$, respectively, at each time $k$. 
	For normal node $i \in \mathcal{S}$, \eqref{atmostf} indicates that the cardinality of the MMC of the values larger than its own value (i.e., values from the normal nodes outside of $\mathcal{S}$) is at most $f$. These values are disregarded by Algorithm~1. Moreover, since the Byzantine nodes send $a$ to node $i$, it will use these values. Thus, node $i$ will keep its value $a$ at all times and the leader will have an increasing value, i.e., the dynamic leader-follower consensus cannot be achieved.

	\textit{(Sufficiency)}
	Recall that $\mathcal{N}=\{d\} \cup \mathcal{W}^\mathcal{N}$ and $|\mathcal{W}^\mathcal{N}|=N$.
	Sort $e_i[k]$, $i\in \mathcal{N}$, from the smallest to the largest values and rename them as $\delta_1[k], \delta_2[k], \dots, \delta_{N+1}[k]$ with
	\begin{align*}
		\delta_1[k]\leq \delta_2[k]\leq \cdots \leq \delta_{N+1}[k] .
	\end{align*}
	Define the following two sets for $h=1, 2,\dots, N+1$:
	\begin{align*}
		\overline{\mathcal{Z}}_h(k)&=\{\delta_{N+2-h}[k], \dots, \delta_{N}[k], \delta_{N+1}[k] \},\\[1mm]
		\underline{\mathcal{Z}}_h(k)&=\{\delta_1[k], \delta_2[k], \dots , \delta_{h}[k] \}.
	\end{align*}
	At any time $k \geq 0$ and for $h=1, 2,\dots, N+1$, the following node sets are defined as 
	\begin{align}\label{xhk}
		\overline{\mathcal{X}}_h(k)&=\{i\in \mathcal{N}  :  e_i[k] \in \overline{\mathcal{Z}}_h(k) \}, \nonumber \\[1mm]
		\underline{\mathcal{X}}_h(k)&=\{i\in \mathcal{N}  :  e_i[k] \in \underline{\mathcal{Z}}_h(k) \} . 
	\end{align}
	Since $e_i[k] = x_i[k] -x_d[k]$, $i \in \mathcal{W}$, and $e_d[k]  = 0$, one gets $\delta_1[k] \leq 0$ and $\delta_{N+1}[k] \geq 0$. Therefore, if $\delta_1[k] = \delta_{N+1}[k]$, one obtains $\delta_1[k] = \delta_{N+1}[k]=0$, indicating that all the normal followers successfully track the leader. 
	
	In what follows, we analyze the behaviors of the normal followers in three steps when $\delta_1[k] \neq \delta_{N+1}[k]$. First, we present equations describing the evolution of consensus errors. Second, we show that $\delta_{N+1}[k]-\delta_{N}[k]$ and $\delta_{2}[k]-\delta_{1}[k]$ are uniformly ultimately bounded.  
	Third, we show by induction that $\delta_{N+1}[k]$ and $\delta_1[k]$ are uniformly ultimately bounded, which leads us to \eqref{reach_consensus} and \eqref{errorbound}. 
	
	\textbf{Step 1:}
	From \eqref{system1} and \eqref{input1}, $\forall i \in \mathcal{W}^\mathcal{N}$, we have
	\begin{align}
		\frac{x_i[k+1]-x_i[k]}{T}= -\gamma_i[k] \phi_i [k] - \theta_i \textup{sat}_\epsilon(\phi_i [k] ) + \sigma_i [k].
	\end{align}
	It further follows that
	\begin{align}\label{eik}
		\frac{e_i[k+1]-e_i[k]}{T} &= -\gamma_i[k] \phi_i [k] - \theta_i \textup{sat}_\epsilon(\phi_i [k] ) + \sigma_i [k] \nonumber \\
		&\hspace*{0.5cm}\mbox{}  - \frac{x_d[k+1]-x_d[k]}{T},
	\end{align}
	where $\phi_i [k]$ is from \eqref{phiik2} and can be rewritten as
	\begin{align}\label{phiik}
		\phi_i [k] = \sum_{j =1}^{n} a_{ij}[k] (e_i[k] - e_j[k]).
	\end{align}
	For the saturation function in \eqref{sat}, when $|\phi_i [k]| > \epsilon$, it holds that $\textup{sat}_\epsilon(\phi_i [k] ) = \textup{sgn}(\phi_i [k])$. Moreover, since $\sigma_i [k]$ and $|x_d[k+1]-x_d[k]|/T$ are bounded, there exists a bounded number $\eta_i [k]$ such that \eqref{eik} can be rewritten as
	\begin{align}\label{eik1}
		\frac{e_i[k+1]-e_i[k]}{T} = -\gamma_i[k] \phi_i [k] - \eta_i[k] \textup{sat}_\epsilon(\phi_i [k] ) .
	\end{align}
	Since $\textup{sat}_\epsilon(\phi_i [k] ) =\textup{sgn}(\phi_i [k])$, when both $\sigma_i [k]$ and $- (x_d[k+1]-x_d[k])$ in \eqref{eik} have the inverse sign as $\phi_i [k]$, we have $\eta_i[k] =\theta_i + |x_d[k+1]-x_d[k]|/T + |\sigma_i [k]|$,
	and when both $\sigma_i [k]$ and $- (x_d[k+1]-x_d[k])$ have the same sign as $\phi_i [k]$, one gets $\eta_i[k] =\theta_i - |x_d[k+1]-x_d[k]|/T - |\sigma_i [k]|>0$ by \eqref{theta}. Thus, at each time $k$, the two values are the possible maximum and minimum values of $\eta_i[k]$. Thus, $\eta_i[k]$ satisfies
	\begin{align}\label{eta1}
		0 &< \theta_i - \sup_{k'\geq0}\left\{ \frac{|x_d[k'+1]-x_d[k']|}{T}+|\sigma_i [k']| \right\} \leq \eta_i[k] \nonumber \\
		&\leq \theta_i + \sup_{k'\geq0}\left\{ \frac{|x_d[k'+1]-x_d[k']|}{T}+|\sigma_i [k']| \right\}.
	\end{align} 
	
	\vspace{-4mm}
	\textbf{Step 2:}
	Since $\delta_1[k]\neq \delta_{N+1}[k]$, at least one normal follower $i\in \mathcal{W}^\mathcal{N}$ with $e_i[k]\neq 0$ satisfies
	\begin{align}\label{x1}
		i\in &\overline{\mathcal{X}}_1(k)  \medspace\medspace \textup{and} \medspace\medspace d\notin \overline{\mathcal{X}}_1(k), \nonumber \\[1mm]
	\textup{or} \medspace\medspace	i\in &\underline{\mathcal{X}}_1(k) \medspace\medspace \textup{and} \medspace\medspace d\notin \underline{\mathcal{X}}_1(k).
	\end{align}
	There are two cases for such normal followers. The first case is $\exists j\in \mathcal{W}_d$ satisfying \eqref{x1}, and the second case is $\nexists j\in \mathcal{W}_d$ satisfying \eqref{x1}. They are analyzed separately:
	
	(i) $\exists j\in \mathcal{W}_d$ satisfying \eqref{x1}: According to the definition of $\mathcal{W}_d$ and Update Rule~\ref{updaterule1}, there must exist a normal follower $i\in \mathcal{W}^\mathcal{N}$ either in $\overline{\mathcal{X}}_1(k)$ or $\underline{\mathcal{X}}_1(k)$ which makes an update using a value from the normal nodes outside the set to which it belongs.
	
	(ii) $\nexists j\in \mathcal{W}_d$ satisfying \eqref{x1}: In this case, since $\mathcal{G}$ is an $(f+1)$-robust following graph with $l$ hops and the adversarial set $\mathcal{A}$ is an $f$-local set, the normal network $\mathcal{G}_{\mathcal{N}} = (\mathcal{N},\mathcal{E}_{\mathcal{N}} )$ must satisfy that for every nonempty subset $\mathcal{S}\subseteq \mathcal{W}^\mathcal{N} \setminus \mathcal{W}_d $, the following condition holds:
	\begin{align*}
		| \mathcal{Z}_{\mathcal{S}}^{f+1}| 
		=| \{i\in \mathcal{S} :  |\mathcal{I}_{i, \mathcal{S}}^l| \geq f+1 \} | \geq 1  .
	\end{align*} 
	Hence, for nonempty node set $\overline{\mathcal{X}}_1(k)$ or $\underline{\mathcal{X}}_1(k)$ in $\mathcal{G}_{\mathcal{N}}$, there exists a normal follower $i$ such that 
	\begin{align*}
		& |\mathcal{I}_{i, \overline{\mathcal{X}}_1(k) }^l | \geq f+1, \medspace\medspace\medspace \textup{if} \medspace i\in \overline{\mathcal{X}}_1(k),  \\
		& |\mathcal{I}_{i, \underline{\mathcal{X}}_1(k) }^l | \geq f+1, \medspace\medspace\medspace \textup{if} \medspace i\in \underline{\mathcal{X}}_1(k)  .
	\end{align*}
	Therefore, there must exist a normal follower $i\in \mathcal{W}^\mathcal{N}$ either in $\overline{\mathcal{X}}_1(k)$ or $\underline{\mathcal{X}}_1(k)$ which makes an update using a value from the normal nodes outside the set to which it belongs. This can be seen from step 2 of the MW-MSR algorithm where node $i$ can only remove smaller or larger values from at most $f$ nodes.

	According to \eqref{xhk}, if $i\in\overline{\mathcal{X}}_1(k)$ or $i\in\underline{\mathcal{X}}_1(k)$, then it holds that $e_i[k] = \delta_{N+1}[k]$ or $e_i[k] = \delta_{1}[k]$, respectively.  
	In this condition, the following situations can happen, respectively, when $|\overline{\mathcal{X}}_1(k)| = 1$ if $i\in\overline{\mathcal{X}}_1(k)$ and when $|\underline{\mathcal{X}}_1(k)| = 1$ if $i\in\underline{\mathcal{X}}_1(k)$: 
	\begin{align}\label{deltan+1}
		\delta_{N+1}[k]-\delta_{N}[k] &> \epsilon_1, \medspace\medspace\medspace \textup{if} \medspace i\in \overline{\mathcal{X}}_1(k), \nonumber \\
		\delta_{2}[k]-\delta_{1}[k] &> \epsilon_1, \medspace\medspace\medspace \textup{if} \medspace i\in \underline{\mathcal{X}}_1(k)  .
	\end{align}
	Moreover, we have known that there is a normal follower $i$ either in $\overline{\mathcal{X}}_1(k)$ or $\underline{\mathcal{X}}_1(k)$ which makes an update using a value from the normal nodes outside the set to which it belongs.  
	Thus, under \eqref{deltan+1}, we can conclude that the distance of $e_i[k]$ from the closest normal $e_j[k]$, is larger than $\epsilon_1$. Moreover, since $\mathcal{A}$ is $f$-local, any adversarial neighbor with consensus error outside the interval $[\delta_{1}[k], \delta_{N+1}[k]]$ will be disregarded by the MW-MSR algorithm. Therefore, according to \eqref{phiik} and \eqref{deltan+1}, for any possible $a_{ij}[k]\geq0$ used in \eqref{phiik}, we have
	\begin{align}\label{phiikabsolute}
		\phi_i[k]> \epsilon, \medspace\medspace\medspace &\textup{if} \medspace i\in \overline{\mathcal{X}}_1(k),  \nonumber \\
		\phi_i[k]< -\epsilon, \medspace\medspace\medspace &\textup{if} \medspace i\in \underline{\mathcal{X}}_1(k)  .
	\end{align}
	Since $|\phi_i[k]| > \epsilon$, we further have $\textup{sat}_\epsilon(\phi_i [k] ) = \textup{sgn}(\phi_i[k])$. Then, from \eqref{x1} and \eqref{phiikabsolute}, it holds that
	\begin{align*}
		\textup{sat}_\epsilon(\phi_i [k] ) = \textup{sgn}(\phi_i[k])=\textup{sgn}(e_i[k]) .
	\end{align*}

	Now, for $|\phi_i[k]| > \epsilon$, we consider the following Lyapunov candidate:
	\begin{align}\label{v}
		V_i[k] = e_i^2[k] .
	\end{align}
	It further implies 
	\begin{align*}
		\Delta V_i[k] &= V_i[k+1]-V_i[k] \\
		&= (e_i[k+1]+ e_i[k])(e_i[k+1]- e_i[k])\\
		&= \left(2e_i[k]-\gamma_i[k] \phi_i [k]T - \eta_i[k] \textup{sat}_\epsilon(\phi_i [k] )T\right)\\
		&\hspace*{0.5cm}\mbox{}  \times (e_i[k+1]- e_i[k]).
	\end{align*}
	Notice from \eqref{eik1} that sgn$(e_i[k])=-\textup{sgn}(e_i[k+1]-e_i[k])$.
	Therefore, in the following, we will show that
	\begin{align}\label{sgn}
		\textup{sgn}\left(2e_i[k]-\gamma_i[k] \phi_i [k]T - \eta_i[k] \textup{sat}_\epsilon(\phi_i [k] )T\right) =\textup{sgn}(e_i[k]).
	\end{align}
	To this end, it is equivalent to show that
	\begin{align*}
		|2e_i[k]| \geq |\gamma_i[k] \phi_i [k]|T + |\eta_i[k] |T.
	\end{align*}
	Suppose $|\delta_{N+1}[k]| \geq |\delta_{1}[k]|$, then consider $i\in\overline{\mathcal{X}}_1(k)$. 
	We can observe that $|\gamma_i[k] \phi_i [k]|\leq 2\alpha_i|e_i[k]|$. Thus, we need
	\begin{align*}
		0< |\eta_i[k] | < \frac{2-2\alpha_i T}{T} |e_i[k]|.
	\end{align*}
	Moreover, since $i\in\overline{\mathcal{X}}_1(k)$, it must hold that $|e_i[k]| > \epsilon$.
	Thus, by \eqref{theta_upper} and \eqref{eta1}, we have
	\begin{align}\label{eta2}
		0< |\eta_i[k] | < \frac{2-2\alpha_i T}{T} \epsilon,
	\end{align}
	and then we can obtain
	\begin{align}\label{dv}
		\Delta V_i[k] = V_i[k+1]-V_i[k] <0.
	\end{align}
	Similarly, if $|\delta_{N+1}[k]| \leq |\delta_{1}[k]|$, then consider $i\in\underline{\mathcal{X}}_1(k)$, and \eqref{dv} follows.

	Therefore, while $|\phi_i [k]| > \epsilon$, $V_i[k]$ is decreasing. Thus, in a finite time, we should have
	\begin{align}\label{deltan+1-reverse}
		\delta_{N+1}[k]-\delta_{N}[k] &\leq \epsilon_1, \medspace\medspace\medspace \textup{if} \medspace i\in \overline{\mathcal{X}}_1(k),  \nonumber \\
		\delta_{2}[k]-\delta_{1}[k] &\leq \epsilon_1, \medspace\medspace\medspace \textup{if} \medspace i\in \underline{\mathcal{X}}_1(k)  .
	\end{align}  
	Also, if we do not encounter \eqref{deltan+1}, we have \eqref{deltan+1-reverse} directly. 
	
	\textbf{Step 3:} 
	When $\delta_{N+1}[k]\neq \delta_{1}[k]$, at least one of the normal followers
	with nonzero consensus error belongs to $\overline{\mathcal{X}}_1(k)$ or $\underline{\mathcal{X}}_1(k)$ and
	\begin{align}
		\overline{\mathcal{X}}_1(k)&\subseteq \overline{\mathcal{X}}_h(k), \nonumber \\[1mm]
		\underline{\mathcal{X}}_1(k)&\subseteq \underline{\mathcal{X}}_h(k),\medspace h=2,\dots, N+1.
	\end{align}
	Then, in $\overline{\mathcal{X}}_h(k)$ or $\underline{\mathcal{X}}_h(k)$, there exists at least one normal follower $i\in \mathcal{W}^\mathcal{N}$ with $e_i[k]\neq 0$. Thus, if $d \notin \overline{\mathcal{X}}_h(k)\cap\underline{\mathcal{X}}_h(k)$, we consider normal followers $i$ where
	\begin{align}\label{xhk1}
		i\in &\overline{\mathcal{X}}_h(k) \medspace\medspace \textup{and} \medspace\medspace d\notin \overline{\mathcal{X}}_h(k), \nonumber \\[1mm]
		\textup{or} \medspace\medspace  i\in &\underline{\mathcal{X}}_h(k) \medspace\medspace \textup{and} \medspace\medspace d\notin \underline{\mathcal{X}}_h(k).
	\end{align} 
	Since $\mathcal{G}$ is an $(f+1)$-robust following graph with $l$ hops, there is at least one node in $\overline{\mathcal{X}}_h(k)$ or $\underline{\mathcal{X}}_h(k)$ which makes an update using at least one value from the normal nodes outside the set to which it belongs.
	Thus, the following situations can happen, respectively, when $|\overline{\mathcal{X}}_h(k)| = h$ if $i\in\overline{\mathcal{X}}_h(k)$ and when $|\underline{\mathcal{X}}_h(k)| = h$ if $i \in \underline{\mathcal{X}}_h(k)$:
	\begin{align}\label{epsilon_h}
		\delta_{N+2-h}[k]-\delta_{N+1-h}[k] &> \epsilon_h, \medspace\medspace\medspace \textup{if} \medspace i\in \overline{\mathcal{X}}_h(k),  \nonumber \\ 
		\delta_{h+1}[k]-\delta_{h}[k] &> \epsilon_h, \medspace\medspace\medspace \textup{if} \medspace i\in \underline{\mathcal{X}}_h(k).
	\end{align}
	Here, we discuss why \eqref{epsilon_h} can happen.
	Because of the ultimate error bounds obtained in steps 1 to $h-1$ (i.e., $\epsilon_1, \epsilon_2,\dots, \epsilon_{h-1}$) and by the definition of $\epsilon_h$, for a finite step $h$, we can obtain
	\begin{align*}
		\delta_{N+2-h}[k]-\delta_{N+1-h}[k] > \medspace\medspace\medspace\medspace\medspace\medspace\medspace\medspace\medspace\medspace\medspace\medspace\medspace\medspace\medspace\medspace\medspace\medspace\medspace\medspace\medspace\medspace\medspace\medspace\medspace\medspace\medspace  &\\
		\sum_{j=1}^{h-1}\left(\delta_{N+2-j}[k]-\delta_{N+1-h}[k]\right)+ \epsilon&, \medspace \textup{if} \medspace i\in \overline{\mathcal{X}}_h(k),  \\
		\delta_{h+1}[k]-\delta_{h}[k] > \sum_{j=1}^{h-1}\left(\delta_{h}[k]-\delta_{j}[k]\right)+ \epsilon&, \medspace \textup{if} \medspace i\in \underline{\mathcal{X}}_h(k).
	\end{align*} 
	This implies that if $i\in \underline{\mathcal{X}}_h(k)$ and $d\notin \underline{\mathcal{X}}_h(k)$, the distance of the consensus error $\delta_{h}[k]$ from the error $\delta_{h+1}[k]$ is larger than the summation of its distances from all the possible smaller errors inside $\underline{\mathcal{X}}_h(k)$ plus $\epsilon$. The discussion is similar when $i\in \overline{\mathcal{X}}_h(k)$ and $d\notin \overline{\mathcal{X}}_h(k)$. Thus, we have proved that the situations in \eqref{epsilon_h} can hold for the two cases in \eqref{xhk1}, respectively. 
	
	Moreover, normal followers using the MW-MSR algorithm will discard the values with consensus errors outside the interval $[\delta_{1}[k], \delta_{N+1}[k]]$. Hence, from \eqref{phiik} and the fact that at least one normal follower in $\overline{\mathcal{X}}_h(k)$ or $\underline{\mathcal{X}}_h(k)$ updates its value using at least one value from the normal node outside the set to which it belongs, we have
	\begin{align}\label{phiikabsoluteh}
		\phi_i[k]> \epsilon, \medspace\medspace\medspace &\textup{if} \medspace i\in \overline{\mathcal{X}}_h(k),  \nonumber \\
		\phi_i[k]< -\epsilon, \medspace\medspace\medspace &\textup{if} \medspace i\in \underline{\mathcal{X}}_h(k)  .
	\end{align}
	Since $|\phi_i[k]| > \epsilon$, we have $\textup{sat}_\epsilon(\phi_i [k] ) = \textup{sgn}(\phi_i[k])$. Then, from \eqref{xhk1} and \eqref{phiikabsoluteh}, it holds that
	\begin{align*}
		\textup{sat}_\epsilon(\phi_i [k] ) = \textup{sgn}(\phi_i[k])=\textup{sgn}(e_i[k]) .
	\end{align*}
	Furthermore, using the Lyapunov candidate $V_i[k]$ in \eqref{v}, we have $\Delta V_i[k]$ as in \eqref{dv} by the reasoning similar to that in Step 2. As a result, while $|\phi_i[k]| > \epsilon$, $V_i[k]$ is decreasing and the convergence is in a finite time. By considering the ultimate error bounds obtained
	in steps $1$ to $h-1$, for step $h\geq 1$, the following should hold in a finite time:
	\begin{align}\label{stepbound}
		\delta_{N+2-h}[k]-\delta_{N+1-h}[k] &\leq \epsilon_h, \medspace\medspace\medspace \textup{if} \medspace i\in \overline{\mathcal{X}}_h(k), \nonumber  \\
		\delta_{h+1}[k]-\delta_{h}[k] &\leq \epsilon_h, \medspace\medspace\medspace \textup{if} \medspace i\in \underline{\mathcal{X}}_h(k)  .
	\end{align}
	Finally, if there exists a step $1 < h \leq N + 1$ such that $d \in \overline{\mathcal{X}}_h(k)\cap\underline{\mathcal{X}}_h(k)$; then, from \eqref{stepbound}, it holds that
	\begin{align*}
		|x_i[k]-x_d[k] | &\leq \sum_{j =1}^{h-1} \epsilon_j, \medspace \forall i\in \mathcal{W}^\mathcal{N}. 
	\end{align*}
	As the maximum of $h$ such that $d \in \overline{\mathcal{X}}_h(k)\cap\underline{\mathcal{X}}_h(k)$ is $N + 1$,
	we obtain the error bound in \eqref{errorbound}. The proof is complete.
\end{proof}

For the spacial case where all followers are direct followers, i.e., $\mathcal{W}_d=\mathcal{W}$, we could obtain a smaller consensus error bound, as stated in the following lemma.

\begin{lemma}\label{lemma_firstorder}
	Consider the network $\mathcal{G} = (\mathcal{V},\mathcal{E})$ with the problem settings and assumptions in Theorem~\ref{theorem_firstorder}. Resilient dynamic leader-follower consensus is achieved as in \eqref{reach_consensus} with consensus error bound $\overline{\epsilon}_{d1}=\max \{\frac{T}{2-\alpha_i T} \eta_i[k], \medspace \epsilon \}$ if all followers are in $\mathcal{W}_d$, i.e., $\mathcal{W}_d=\mathcal{W}$.
\end{lemma}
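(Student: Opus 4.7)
The plan is to specialize the argument of Theorem~\ref{theorem_firstorder} to the case $\mathcal{W}_d=\mathcal{W}$, where the multi-layer induction over $h=1,\dots,N$ collapses into a scalar, decoupled ultimate-boundedness analysis for each normal follower. Intuitively, every normal follower now sees the leader's value directly, so Algorithm~1 is bypassed and the cascaded amplification captured by $\epsilon_h = h\epsilon_{h-1}$ in \eqref{errorbound} no longer occurs.

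First, I rewrite the error dynamics from Step~1 of the proof of Theorem~\ref{theorem_firstorder} under the direct-follower branch of Update Rule~\ref{updaterule1}. Since every $i\in\mathcal{W}^\mathcal{N}$ lies in $\mathcal{W}_d$, Update Rule~\ref{updaterule1} sets $\phi_i[k]=x_i[k]-x_d[k]=e_i[k]$, so \eqref{eik1} simplifies to
\begin{equation*}
\frac{e_i[k+1]-e_i[k]}{T}=-\gamma_i[k]\,e_i[k]-\eta_i[k]\,\textup{sat}_\epsilon(e_i[k]),
\end{equation*}
with $\gamma_i[k]=\alpha_i$ (the only effective weight is the one on the leader) and $\eta_i[k]$ bounded as in \eqref{eta1}. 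Crucially, each $e_i[k]$ now evolves independently of the other followers' errors, which is the source of the improved bound.

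Second, reusing the Lyapunov candidate $V_i[k]=e_i^2[k]$ from \eqref{v}, I aim to show that $V_i[k]$ strictly decreases whenever $|e_i[k]|>\overline{\epsilon}_{d1}$. In the outer regime $|e_i[k]|>\epsilon$, $\textup{sat}_\epsilon(e_i[k])=\textup{sgn}(e_i[k])$ and the sign computation culminating in \eqref{sgn} applies verbatim, yielding $\Delta V_i[k]<0$ as soon as $(2-\alpha_i T)|e_i[k]|>T\,\eta_i[k]$, i.e., $|e_i[k]|>\tfrac{T\eta_i[k]}{2-\alpha_i T}$; compatibility of this condition with $|e_i[k]|>\epsilon$ is guaranteed by the design inequalities \eqref{theta}--\eqref{theta_upper}. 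Once the trajectory enters $\{|e|\leq\overline{\epsilon}_{d1}\}$, I verify forward invariance by a one-step worst-case estimate on \eqref{eik} in the linear saturation regime, again using \eqref{theta_upper} to bound the combined disturbance $\sigma_i[k]-v_d[k]$.

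The conclusion \eqref{reach_consensus} then follows per follower in finite time, with error bound $\overline{\epsilon}_{d1}=\max\{\tfrac{T\eta_i[k]}{2-\alpha_i T},\epsilon\}$. Because the argument is entirely decoupled across followers and never invokes Algorithm~1 or the MMC machinery, no graph condition beyond $\mathcal{W}_d=\mathcal{W}$ is required. The main obstacle I anticipate is precisely the boundary bookkeeping at $|e_i[k]|=\epsilon$: the transition between the linear and saturated regimes of $\textup{sat}_\epsilon$ must be handled so that a one-step overshoot never leaves $\{|e|\leq\overline{\epsilon}_{d1}\}$; aside from this, the argument is a direct scalar specialization of the Lyapunov arithmetic already developed for Theorem~\ref{theorem_firstorder}.
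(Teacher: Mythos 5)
Your proposal matches the paper's own proof in approach and substance: the paper likewise specializes to $\phi_i[k]=e_i[k]$ with $\gamma_i[k]=\alpha_i$, applies the Lyapunov candidate $V_i[k]=e_i^2[k]$, and derives $\Delta V_i[k]<0$ whenever $|e_i[k]|\geq \frac{T}{2-\alpha_i T}\eta_i[k]$, concluding the bound $\overline{\epsilon}_{d1}=\max\{\frac{T}{2-\alpha_i T}\eta_i[k],\epsilon\}$. Your added attention to forward invariance of $\{|e|\leq\overline{\epsilon}_{d1}\}$ across the saturation boundary is a detail the paper leaves implicit, but it does not change the argument.
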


\begin{proof}
	For any node $i\in \mathcal{W}^\mathcal{N} \subseteq \mathcal{W}_d$, if $e_i[k]>\epsilon$, then $\phi_i[k]=e_i[k]>\epsilon$ and $\gamma_i[k]=\alpha_i$. We further have 	
	\begin{align*}
		\Delta V_i[k] 
		&= \left(2e_i[k]-\gamma_i[k] \phi_i [k]T - \eta_i[k] \textup{sat}_\epsilon(\phi_i [k] )T\right)\\
		&\hspace*{0.5cm}\mbox{}  \times (e_i[k+1]- e_i[k])\\
		&= \left(2e_i[k]-e_i [k] \alpha_i T - \eta_i[k] \textup{sgn}(e_i [k] )T\right)\\
		&\hspace*{0.5cm}\mbox{}  \times (e_i[k+1]- e_i[k])
	\end{align*}
	Thus, if it holds that $\alpha_i T<2$ and
	\begin{align*}
		\textup{sgn}\left(2e_i[k]-e_i [k]\alpha_i T - \eta_i[k] \textup{sgn}(e_i [k] )T\right) =\textup{sgn}(e_i[k]),
	\end{align*}
	then by following a similar analysis, we have $\Delta V_i[k]  <0$. Therefore, when
	\begin{align*}
		|e_i[k] | \geq \frac{T}{2-\alpha_i T} \eta_i[k],
	\end{align*}
	we have $\Delta V_i[k]  <0$, and $e_i[k]$ will decrease until $e_i[k]\leq \epsilon$ if $\epsilon \geq \frac{T}{2-\alpha_iT} \eta_i[k]$. The situation is similar when $e_i[k]<-\epsilon$.
	Therefore, we conclude that in a finite time, we have 
	\begin{align*}
		|e_i[k] | \leq \max \left\{\frac{T}{2-\alpha_i T} \eta_i[k], \medspace \epsilon \right\}=\overline{\epsilon}_{d1}, \forall i\in \mathcal{W}^\mathcal{N} \subseteq \mathcal{W}_d. 
	\end{align*}
	\vspace{-2mm}
\end{proof}		

\begin{remark}\label{smallerbound1}
	Note that when $T$ is small, we could get a smaller bound for consensus errors of normal followers in $\mathcal{W}_d$ if $\frac{T}{2-\alpha_i T} \eta_i[k] < \epsilon$. Furthermore, when $|\mathcal{W}_d\cap \mathcal{N}|<N$, we could have a smaller consensus error bound $\overline{\epsilon}_1<\overline{\epsilon}$ in \eqref{errorbound} for Theorem~\ref{theorem_firstorder}. It is given by
	\begin{align}\label{errorbound_smaller}
		\overline{\epsilon}_1 = \sum_{h =1}^{N-|\mathcal{W}_d\cap \mathcal{N}|} \epsilon_h + \max \left\{\frac{T}{2-\alpha_i T} \eta_i[k], \medspace \epsilon \right\}   .
	\end{align}
\end{remark}

\vspace{-5mm} 
One can observe that the error bound in \eqref{errorbound} is conservative as it is proportional to the number of normal followers $N$; a similar bound is also reported in \cite{rezaee2021resiliency}. Our analysis can obtain a smaller error bound in \eqref{errorbound_smaller} as it is proportional to the number of normal non-direct followers $N-|\mathcal{W}_d\cap \mathcal{N}|$. Part of the reason for the conservatism is that the error will increase by $\epsilon_h$ using the MSR algorithms for each layer of normal followers (see Step 3 in the proof of Theorem~\ref{theorem_firstorder}); it further accumulates over multiple layers of normal followers. Moreover, because of the above reason, we can obtain consensus errors that are further smaller than the bound in \eqref{errorbound_smaller} for general networks, as the number of layers w.r.t. the leader will decrease with multi-hop relays.
Lastly, we note that even for the fault-free discrete-time dynamic leader-follower consensus (\cite{cao2009distributed}), the consensus error cannot approach zero unless $T\to 0$. In our results, the consensus error can approach zero if $T$ and $\epsilon$ are properly chosen such that \eqref{theta_upper} is satisfied.
	

	It is notable that our graph condition in Theorem~\ref{theorem_firstorder} is both necessary and sufficient for our algorithm to solve Problem~\ref{problem}. In fact, even for the one-hop case, it is tighter than the one in \cite{rezaee2021resiliency} as shown next in Lemma~\ref{tighter1}. In particular, \cite{rezaee2021resiliency} have derived a sufficient graph condition for a modified W-MSR algorithm, that is, $\mathcal{G}$ is a $(2f +1)$-robust leader-follower graph. It requires $|\mathcal{W}_d|\geq 2f +1$ and any nonempty set $\mathcal{S} \subseteq \mathcal{W}\setminus \mathcal{W}_d$ is $(2f +1)$-reachable, i.e., $\exists i \in \mathcal{S}$ s.t. $|\mathcal{N}_i^{1-} \setminus \mathcal{S}| \geq 2f+1$. The difference between the two conditions is shown in the next lemma, whose proof can be found in \cite{yuan2024reaching}.

\begin{lemma}\label{tighter1}
	If $ \mathcal{G}$ is a $(2f +1)$-robust leader-follower graph, then $\mathcal{G}$ is an $(f+1)$-robust following graph with $1$ hop, and the converse does not hold.
\end{lemma}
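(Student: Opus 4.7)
The plan is to prove the forward implication by an accounting argument that balances the ``$2f+1$'' budget against the worst-case number of $f$-local adversaries, and to establish strictness by exhibiting a small counterexample that exploits the fact that $(2f+1)$-robust leader-follower explicitly requires $|\mathcal{W}_d|\geq 2f+1$, whereas the following-graph notion imposes no such global lower bound on $|\mathcal{W}_d|$.

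For the forward direction, I would fix any $f$-local set $\mathcal{F}\subset \mathcal{W}$, set $\mathcal{H}=\mathcal{W}\setminus \mathcal{F}$, and take an arbitrary nonempty $\mathcal{S}\subseteq \mathcal{H}\setminus \mathcal{W}_d$. Since $\mathcal{S}\subseteq \mathcal{W}\setminus \mathcal{W}_d$, the $(2f+1)$-robust leader-follower hypothesis supplies a node $i\in\mathcal{S}$ with $|\mathcal{N}_i^{1-}\setminus \mathcal{S}|\geq 2f+1$ in $\mathcal{G}$. The key observation is that $i\notin \mathcal{W}_d$, so the leader $d$ is not an in-neighbor of $i$, and thus all of these $2f+1$ in-neighbors already lie in $\mathcal{W}\setminus \mathcal{S}$. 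The $f$-locality of $\mathcal{F}$ then yields $|\mathcal{N}_i^{1-}\cap \mathcal{F}|\leq f$, so deleting $\mathcal{F}$ leaves at least $(2f+1)-f=f+1$ in-neighbors of $i$ in $\mathcal{H}\setminus \mathcal{S}$. For $l=1$ each such in-edge is its own independent path (only $i$ is common), hence $|\mathcal{I}_{i,\mathcal{S}}^{1}|\geq f+1$, verifying \eqref{zrs}.

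For the strictness claim, I would construct a small digraph satisfying the $(f+1)$-robust following property with $1$ hop while violating $|\mathcal{W}_d|\geq 2f+1$; the latter failure alone prevents $\mathcal{G}$ from being $(2f+1)$-robust leader-follower. With $f=1$, for instance, one can take exactly two direct followers $\mathcal{W}_d=\{1,2\}$ and a handful of non-direct followers arranged so that in every relevant subgraph $\mathcal{G}_{\mathcal{H}}$ and every nonempty $\mathcal{S}\subseteq \mathcal{H}\setminus \mathcal{W}_d$, some node of $\mathcal{S}$ retains two distinct in-neighbors from $\mathcal{H}\setminus \mathcal{S}$. Any sufficiently redundantly wired network with $|\mathcal{W}_d|=2$ suffices, and it immediately separates the two notions.

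The main obstacle is clerical rather than conceptual: carefully tracking which vertices are removed by the subgraph restriction to $\mathcal{H}$ versus the set-difference with $\mathcal{S}$, and pinning down that $d\notin \mathcal{N}_i^{1-}$ whenever $i\in \mathcal{H}\setminus \mathcal{W}_d$. That single observation is what lets the full $2f+1$ budget be spent on surviving adversaries, giving exactly the $f+1$ needed by Definition~\ref{robust_following}; without it one would end up with only $f$ independent paths and the implication would collapse.
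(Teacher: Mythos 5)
Your forward implication is correct and is essentially the standard counting argument one would expect: for $i\in\mathcal{S}\subseteq\mathcal{H}\setminus\mathcal{W}_d$ the leader contributes nothing to $\mathcal{N}_i^{1-}$, so all $2f+1$ in-neighbors outside $\mathcal{S}$ lie in $\mathcal{W}\setminus\mathcal{S}$, at most $f$ of them fall in the $f$-local set $\mathcal{F}$, and the surviving $f+1$ in-edges are trivially vertex-independent $1$-hop paths in $\mathcal{G}_{\mathcal{H}}$. (The paper itself defers the proof to \cite{yuan2024reaching}, so there is no in-text argument to compare against, but this is the intended mechanism.)

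The strictness half, however, contains a genuine error. Your proposed counterexample --- $f=1$, $\mathcal{W}_d=\{1,2\}$, plus non-direct followers, with the graph still a $2$-robust following graph with $1$ hop --- cannot exist: Lemma~\ref{2f+1leaders}(1) states (and it is easy to verify by taking $\mathcal{F}\subseteq\mathcal{W}_d$ with $|\mathcal{F}|=f$ and $\mathcal{S}=\mathcal{W}\setminus\mathcal{W}_d$) that whenever $\mathcal{W}_d\neq\mathcal{W}$, the $(f+1)$-robust following property with $l$ hops already forces $|\mathcal{W}_d|\geq 2f+1$. So your premise that ``the following-graph notion imposes no such global lower bound on $|\mathcal{W}_d|$'' is false in exactly the regime you invoke it, and the separation cannot come from the $|\mathcal{W}_d|\geq 2f+1$ clause. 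The actual gap lies elsewhere: the leader-follower condition demands a \emph{single} node of $\mathcal{S}$ with $2f+1$ in-neighbors outside $\mathcal{S}$ in $\mathcal{G}$, whereas the following-graph condition only requires that \emph{for each} $f$-local $\mathcal{F}$ some node of $\mathcal{S}$ (possibly a different one for different $\mathcal{F}$) retain $f+1$ in-neighbors in $\mathcal{H}\setminus\mathcal{S}$; reachability can thus be distributed across several nodes of $\mathcal{S}$, none of which individually meets the $2f+1$ threshold. The paper exhibits exactly such a witness: the graph of Fig.~\ref{9node}(b) is a $2$-robust following graph with $1$ hop under the $1$-local model, yet for $\mathcal{S}=\{1,2,3\}$ no node has $3$ in-neighbors outside $\mathcal{S}$, so it is not a $3$-robust leader-follower graph. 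You would need to replace your construction with one of this type for the ``converse does not hold'' claim to stand.
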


\begin{remark}
	We compare our graph condition with the ones in related works.
	With one-hop communication (i.e., no relays), our condition in Theorem~\ref{theorem_firstorder} is equivalent to the necessary and sufficient condition for the CPA to succeed (\cite{tseng2015broadcast}). Moreover, our condition with $1$ hop is tighter than the sufficient conditions in \cite{usevitch2020resilient,rezaee2021resiliency} studying resilient leader-follower consensus as discussed above. Besides, for the multi-hop case, our condition with $l\geq 2$ hops is even tighter than the conditions in related works (\cite{usevitch2020resilient,rezaee2021resiliency,tseng2015broadcast}). The reason is that the graph robustness generally increases (and definitely does not decrease) as the relay range $l$ increases. Hence, without changing the network topology, our methods can tolerate more adversaries; see, e.g., the simulations in Section~\ref{sec_example}.
\end{remark}

Next, we explain the reasons why our approach has a tighter graph condition. First, for the one-hop case, by Theorem~\ref{theorem_firstorder}, our condition is necessary for MSR-based algorithms to achieve leader-follower consensus. Here, we show the gap between the condition in \cite{rezaee2021resiliency} and ours using the graph in Fig.~\ref{9node}(b). Let $\mathcal{A}=\{5\}$ and $\mathcal{S}=\{1,2,3\}$, which does not meet the condition in \cite{rezaee2021resiliency}.
However, node 2 in $\mathcal{S}$ can still move towards the reference when its error becomes the smallest (or largest) one in $\mathcal{N}$. Then, it removes only the largest (or smallest) errors from $f$ nodes and hence can be affected by the normal nodes outside $\mathcal{S}$.
Second, for the multi-hop case, our condition is even tighter since normal nodes can obtain more values from normal neighbors. Algorithm~1 ensures that even with multi-hop relays, normal nodes can remove the extreme values from adversaries (i.e., the errors outside $[\delta_{1}[k], \delta_{N+1}[k]]$). Thus, it remains to show that normal node(s) in $\mathcal{S}$ can obtain more values from nodes in $\mathcal{N}\setminus\mathcal{S}$. For example, in Fig~\ref{9node}(a), when $l=1$, the set $\mathcal{S}=\{1,2,3,6\}$ has no node connected to its outside after applying Algorithm~1. However, node 2 in $\mathcal{S}$ satisfies this condition when $l\geq 2$.

\begin{remark}
	We study the Byzantine model, which is more adversarial than the malicious model in \cite{leblanc2013resilient,yuan2021resilient}. 
	However, we conclude that for our approaches to solve Problem~\ref{problem} under the malicious model, the graph condition in Theorem~\ref{theorem_firstorder} is necessary and sufficient; see Remark~\ref{remark_leaderless}.
\end{remark}

\subsection{Properties of (f+1)-Robust Following Graphs}

We list several properties of $(f+1)$-robust following graphs with $l$ hops from \cite{yuan2024reaching}.

\begin{lemma}\label{2f+1leaders}
	If graph $\mathcal{G}$ with $\mathcal{W}_d \neq \mathcal{W}$ is an $(f+1)$-robust following graph with $l$ hops under the $f$-local model, then the following hold:
	\begin{enumerate}
		\item $|\mathcal{W}_d| \geq 2f+1$.
		\item $\exists i \in \mathcal{W}^\mathcal{N}$ s.t. $|\mathcal{N}_i^{l-} \cap \mathcal{W}_d| \geq 2f+1$. 
		\item $|\mathcal{N}_i^{1-}| \geq 2f+1, \forall i \in \mathcal{W}\setminus \mathcal{W}_d$. Moreover, the minimum number of directed edges of $\mathcal{G}$ with minimum $|\mathcal{W}_d|$ is $(2f+1)(|\mathcal{W}\setminus \mathcal{W}_d|+1)$.
	\end{enumerate}
\end{lemma}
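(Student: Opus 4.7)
The plan is to apply Definition~\ref{robust_following} with carefully selected pairs $(\mathcal{F},\mathcal{S})$ and to bound the number of vertex-independent paths via a Menger-style argument. Across all three statements, the essential observation is that every independent path from the sources $\{d\}\cup\mathcal{W}_d$ to a node $i\in\mathcal{W}\setminus\mathcal{W}_d$ must either begin at a direct follower or enter the network from $d$ through a direct follower, and these distinguishing direct followers are necessarily distinct when the paths share only $i$.

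For part~1, I would argue by contradiction: if $|\mathcal{W}_d|\leq 2f$, set $\mathcal{F}\subseteq\mathcal{W}_d$ with $|\mathcal{F}|=\min(f,|\mathcal{W}_d|)$ (automatically $f$-local since $|\mathcal{F}|\leq f$) and $\mathcal{S}=\mathcal{W}\setminus\mathcal{W}_d$, which is nonempty by the hypothesis $\mathcal{W}_d\neq\mathcal{W}$. The observation above then gives $|\mathcal{I}^l_{i,\mathcal{S}}|\leq|\mathcal{W}_d\setminus\mathcal{F}|\leq f$ for every $i\in\mathcal{S}$, contradicting the $(f+1)$-robust following condition. For the in-degree bound in part~3, I would apply Definition~\ref{robust_following} with singleton $\mathcal{S}=\{i\}$ for each $i\in\mathcal{W}\setminus\mathcal{W}_d$, choosing $\mathcal{F}$ as $f$ of $i$'s immediate in-neighbors; since the penultimate nodes of independent paths ending at $i$ are distinct 1-hop in-neighbors of $i$, requiring $f+1$ residual independent paths forces the 1-hop in-degree of $i$ to be at least $2f+1$.

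For part~2, I would apply the condition iteratively. Starting from $\mathcal{F}_0=\emptyset$ and $\mathcal{S}=\mathcal{W}\setminus\mathcal{W}_d$, one first obtains some $i_0\in\mathcal{S}$ with $|\mathcal{N}^{l-}_{i_0}\cap\mathcal{W}_d|\geq f+1$; if this is already at least $2f+1$ we are done, otherwise one chooses $\mathcal{F}_1\subseteq\mathcal{N}^{l-}_{i_0}\cap\mathcal{W}_d$ with $|\mathcal{F}_1|=f$ and reapplies the condition. When the returned node coincides with $i_0$, the argument closes because $|\mathcal{N}^{l-}_{i_0}\cap\mathcal{W}_d|\geq f+(f+1)=2f+1$. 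The main obstacle is the case where successive iterations produce distinct followers; I anticipate resolving this by a pigeonhole-style argument exploiting the finiteness of $\mathcal{W}\setminus\mathcal{W}_d$ together with the overlap structure of the sets $\mathcal{N}^{l-}_j\cap\mathcal{W}_d$—in particular, examining the follower $j^\ast$ maximizing $|\mathcal{N}^{l-}_j\cap\mathcal{W}_d|$ and choosing $\mathcal{F}$ inside $\mathcal{N}^{l-}_{j^\ast}\cap\mathcal{W}_d$ should force the coincidence case and hence the $2f+1$ bound.

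Finally, the edge-count claim of part~3 follows by summing the lower bounds: at least $2f+1$ edges originate at $d$ (from part~1), and each of the $|\mathcal{W}\setminus\mathcal{W}_d|$ non-direct followers contributes at least $2f+1$ in-edges, yielding a total of at least $(2f+1)(|\mathcal{W}\setminus\mathcal{W}_d|+1)$. To certify achievability, I would exhibit a layered construction with $|\mathcal{W}_d|=2f+1$ in which every non-direct follower has exactly $2f+1$ in-edges arranged to realize $f+1$ independent paths to itself from the direct followers via at most $l$ hops, thereby meeting Definition~\ref{robust_following} with no edges to spare.
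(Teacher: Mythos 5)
The paper itself contains no proof of Lemma~\ref{2f+1leaders}; it is imported from \cite{yuan2024reaching}, so your attempt can only be judged on its own terms. Your arguments for part~1 and for part~3 are essentially correct: with $\mathcal{F}\subseteq\mathcal{W}_d$ and $\mathcal{S}=\mathcal{W}\setminus\mathcal{W}_d$, the only nodes of $\mathcal{H}$ outside $\mathcal{S}$ lie in $\mathcal{W}_d\setminus\mathcal{F}$ and independent paths have distinct origins, so $|\mathcal{I}^l_{i,\mathcal{S}}|\leq|\mathcal{W}_d\setminus\mathcal{F}|\leq f$; with $\mathcal{S}=\{i\}$ the surviving independent paths terminate in distinct one-hop in-neighbors of $i$ (all of which lie in $\mathcal{W}$ since $i\notin\mathcal{W}_d$), forcing $2f+1$ of them after deleting $f$; and the edge count follows because the $2f+1$ edges leaving $d$ and the in-edges of the non-direct followers are disjoint families. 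The only loose end there is achievability of the edge bound, which you promise via a layered construction but do not exhibit.

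Part~2 is where the proposal genuinely fails. You only ever delete sets $\mathcal{F}$ of cardinality $f$, and your closing device --- take $j^{*}$ maximizing $|\mathcal{N}^{l-}_{j}\cap\mathcal{W}_d|$ and place $\mathcal{F}$ inside $\mathcal{N}^{l-}_{j^{*}}\cap\mathcal{W}_d$ --- does not force the coincidence case: Definition~\ref{robust_following} may return a node $i_1\neq j^{*}$ whose $l$-hop in-neighborhood is disjoint from $\mathcal{F}$, in which case you learn only $|\mathcal{N}^{l-}_{i_1}\cap\mathcal{W}_d|\geq f+1$, which neither contradicts the maximality of $j^{*}$ nor improves on the bound you already had. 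Concretely, take $f=1$, $l=1$, $d\to a,b,c$, and non-direct followers $u_1,u_2,u_3$ with in-neighbor sets $\{a,b,u_2\}$, $\{b,c,u_3\}$, $\{a,c,u_1\}$: each $u_j$ sees exactly $2=2f$ direct followers, and whichever single direct follower you delete, another $u_j$ retains both of its direct followers, so your iteration cycles indefinitely without ever certifying a node with $2f+1=3$. In fact, one can check that this graph satisfies Definition~\ref{robust_following} with $r=2$, $l=1$ for \emph{every} $1$-local $\mathcal{F}$ (including non-singleton ones such as $\{a,u_3\}$), yet no follower has three direct followers within one hop; so no argument that, like yours, draws only on the reproduced definition can establish part~2, and you would need to consult the proof (and the possibly stronger hypotheses or definitions) in \cite{yuan2024reaching}.
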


From Lemma~\ref{2f+1leaders}(3), we see that each follower node must have $2f+1$ incoming edges. Moreover, $\mathcal{G}$ with minimum $|\mathcal{W}_d|$ must have directed edges no less than $(2f+1)(|\mathcal{W}\setminus \mathcal{W}_d|+1)$; this requirement is consistent with the one reported in \cite{rezaee2021resiliency}. However, we emphasize that our condition is tighter as mentioned earlier and covers a wider range of graphs. Besides, we can utilize undirected edges to relax the heavy connectivity requirement. For example, in Fig.~\ref{15node}, the graph $\mathcal{G}$ has 33 directed/undirected edges while it needs 55 directed edges to satisfy the condition in \cite{rezaee2021resiliency}.

The following lemma shows the equivalence between our condition with $f=0$ and the one for the fault-free case (\cite{ren2007multi}); it can be derived from Definition~\ref{robust_following}.
\begin{lemma}
	The graph $\mathcal{G}$ is a $1$-robust following graph with $l\geq 1$ hops if and only if $\mathcal{G}$ has a spanning tree rooted at the leader.
\end{lemma}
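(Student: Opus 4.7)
The statement is an iff, so I will prove both directions separately, handling the trivial case $\mathcal{W}_d=\mathcal{W}$ upfront (the leader then has a direct edge to every follower, which is itself a spanning tree, and the second clause of Definition~\ref{robust_following} declares $\mathcal{G}$ automatically $r$-robust following for any $r\leq n-1$). For the nontrivial case $\mathcal{W}_d\neq\mathcal{W}$, with $f=0$ we have $\mathcal{F}=\emptyset$ and $\mathcal{H}=\mathcal{W}$, so the definition reduces to: for every nonempty $\mathcal{S}\subseteq \mathcal{W}\setminus\mathcal{W}_d$, some node $i\in\mathcal{S}$ admits at least one path of at most $l$ hops whose source lies outside $\mathcal{S}$.

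\textbf{Necessity (robust following $\Rightarrow$ spanning tree).} I will argue by contradiction. Assume $\mathcal{G}$ is a $1$-robust following graph with $l$ hops but does not have a spanning tree rooted at $d$. Let $\mathcal{S}$ be the (nonempty) set of followers \emph{not} reachable from $d$ via directed paths in $\mathcal{G}$. Since every node in $\mathcal{W}_d$ is one hop from $d$, we have $\mathcal{S}\cap\mathcal{W}_d=\emptyset$, i.e.\ $\mathcal{S}\subseteq\mathcal{W}\setminus\mathcal{W}_d$. Applying the $1$-robust following property to this $\mathcal{S}$ produces some $i\in\mathcal{S}$ and a path $(v_1,v_2,\dots,v_q{=}i)$ of length $q-1\leq l$ with $v_1\notin\mathcal{S}$. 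Because $v_1\in\mathcal{W}\setminus\mathcal{S}$, the node $v_1$ is itself reachable from $d$ (it is either in $\mathcal{W}_d$ or in $\mathcal{W}\setminus\mathcal{W}_d$ but outside $\mathcal{S}$, hence reachable by definition of $\mathcal{S}$). Concatenating a $d$-to-$v_1$ path with the given path yields a directed walk from $d$ to $i$, contradicting $i\in\mathcal{S}$.

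\textbf{Sufficiency (spanning tree $\Rightarrow$ robust following).} Assume $\mathcal{G}$ has a spanning tree rooted at $d$ and fix any nonempty $\mathcal{S}\subseteq\mathcal{W}\setminus\mathcal{W}_d$. Pick any $v\in\mathcal{S}$ and take a directed path $d=v_0\to v_1\to\cdots\to v_k=v$ in $\mathcal{G}$. Let $j$ be the smallest index with $v_j\in\mathcal{S}$; such a $j$ exists since $v_k\in\mathcal{S}$. Because $v_0=d\notin\mathcal{W}$ and $v_1\in\mathcal{W}_d$ (so $v_1\notin\mathcal{S}$), necessarily $j\geq 2$, which gives $v_{j-1}\in\mathcal{W}\setminus\mathcal{S}$. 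Then the single edge $v_{j-1}\to v_j$ is a $1$-hop path in $\mathcal{G}_{\mathcal{H}}$ from outside $\mathcal{S}$ to $v_j\in\mathcal{S}$, so $|\mathcal{I}_{v_j,\mathcal{S}}^l|\geq 1$ for every $l\geq 1$. Hence $|\mathcal{Z}_{\mathcal{S}}^1|\geq 1$, establishing the $1$-robust following property.

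\textbf{Anticipated obstacle.} There is no deep technical step here; the only delicate point is bookkeeping around the definitions, specifically verifying that (i) with $f=0$ the subgraph $\mathcal{G}_{\mathcal{H}}$ coincides with the follower-induced subgraph, and (ii) the first node of the spanning-tree path that lands in $\mathcal{S}$ has a predecessor that lies in $\mathcal{W}$ (not equal to $d$), which is what makes the edge legitimately a path inside $\mathcal{G}_{\mathcal{H}}$ in the sense required by Definition~\ref{robust_following}. Once this is done carefully, both directions are one short argument each.
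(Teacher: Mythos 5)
Your proof is correct. The paper does not actually supply an argument for this lemma (it only remarks that the equivalence ``can be derived from Definition~\ref{robust_following}''), and what you have written is precisely the intended derivation: with $f=0$ the definition collapses to requiring, for every nonempty $\mathcal{S}\subseteq\mathcal{W}\setminus\mathcal{W}_d$, one node of $\mathcal{S}$ with an incoming path from $\mathcal{W}\setminus\mathcal{S}$, and your two directions (taking $\mathcal{S}$ to be the followers unreachable from $d$ for necessity, and the first entry point of a spanning-tree path into $\mathcal{S}$ for sufficiency) are exactly the natural way to match this against reachability from $d$. The only point you silently pass over is that a $0$-local set $\mathcal{F}$ need not literally be empty under Definition~3 (a follower with no outgoing edges to any node within $l$ hops of a normal node could belong to one); this is harmless here, since such a node lies on no path to any other node and its removal cannot destroy the reachability your sufficiency argument uses, but a one-line remark to that effect would make the reduction to $\mathcal{F}=\emptyset$ airtight.
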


\section{Resilient Dynamic Leader-Follower Consensus in Second-Order MASs}\label{sec_second}

In this section, we switch our attention to the MASs where agents have second-order dynamics.

Consider a second-order MAS with communication network $\mathcal{G}= (\mathcal{V},\mathcal{E})$. 
Each follower node $i\in \mathcal{W} $ has a double-integrator dynamics modified from \cite{bullo2009distributed,dibaji2017resilient}. Its discretized form is given as 
\begin{align}\label{secondorder}
	 \hat{x}_i[k+1] &= \hat{x}_i[k] + T v_i[k] , \nonumber \\ 
	 v_i[k+1] &= v_i[k] + T u_i[k] + T \sigma_i[k],
\end{align}
where $v_i[k] \in \mathbb{R}$ and $\sigma_i[k]$ are, respectively, the velocity and the bounded external disturbance with known bound $\overline{\sigma}_i$ of node $i$ at time $k$. Moreover, $\hat{x}_i[k]=x_i[k]- \rho_i$, where $x_i[k] \in \mathbb{R}$ is the absolute position value of node $i$ and $\rho_i \in \mathbb{R}$ is a constant representing the desired relative position value of node $i$ in a formation. For the sake of simplicity, we call $\hat{x}_i[k]$ to be the agents’ position values.
Similarly, the leader $d$ updates its value as
\begin{align}\label{secondorder-leader}
	 \hat{x}_d[k+1] &= \hat{x}_d[k] + T v_d[k] , \nonumber \\ 
	 v_d[k+1] &= v_d[k] + T u_d[k],
\end{align}
where $ \hat{x}_d[k] = x_d[k]- \rho_d$. Here, $x_d[k]$ and $\rho_d \in \mathbb{R}$ are the absolute position and the desired relative position in the formation of leader $d$, respectively.

For the second-order MAS in \eqref{secondorder}, our resilient dynamic leader-follower consensus problem is almost the same as Problem~\ref{problem} except that agents exchange $\hat{x}$ values with neighbors.
To solve this problem, we present Update Rule~\ref{updaterule2}, where each follower uses only the position values $\hat{x}_i[k]$ of neighbors within $l$ hops in its control input at each time $k$. Specifically, for each normal follower $i\in\mathcal{W}^\mathcal{N}$, the control input $u_i [k]$ is given as
\begin{align}\label{input2}
	u_i [k] = -\gamma_i[k] \phi_i [k] - \theta_i \textup{sat}_\epsilon(\phi_i [k] ) -\beta v_i[k].
\end{align}
Here, $\gamma_i[k]= \alpha_i / \sum_{j =1}^{n} a_{ij}[k]$, and $\theta_i, \epsilon,\beta  \in \mathbb{R}_{>0}$. Moreover, $\phi_i [k]$ is obtained using \eqref{msrupdate} in Algorithm~1 and is rewritten by
\begin{align}\label{phiik_second}
	\phi_i [k] = \sum_{j =1}^{n} a_{ij}[k] (\hat{x}_i[k] - \hat{x}_j[k]),
\end{align}
where $a_{ij}[k] > 0$ if $m_{ji}[k]\in \mathcal{M}_i[k]\setminus \mathcal{R}_i[k]$, and $a_{ij}[k] = 0$ otherwise. However, the control inputs of adversary nodes in $\mathcal{A}$ are arbitrary and may deviate from above.

\begin{update}\label{updaterule2}
	At each time $k\geq 0$, each normal direct follower $j\in \mathcal{W}_d \cap \mathcal{N}$ updates its value as $\phi_j [k] = \hat{x}_j[k] - \hat{x}_d[k]$ and then follows \eqref{secondorder} with \eqref{input2}.
	Moreover, each normal non-direct follower $i\in \mathcal{W}^\mathcal{N} \setminus \mathcal{W}_d$ updates its $\phi_i [k]$ according to Algorithm~1 with inputs $\hat{x}_i[k], k\geq 0$, and follows \eqref{secondorder} with \eqref{input2}.
\end{update}

For the parameters $\beta$ and $\theta_i$ in \eqref{input2}, we assume
\begin{align}\label{beta}
	0< \beta \overline{v}_d \leq \theta_i .
\end{align}

\vspace{-4mm}
\subsection{Convergence Analysis}

The next theorem is the main result of this section.
It provides a necessary and sufficient graph condition for Update Rule~\ref{updaterule2} to achieve the objective in \eqref{reach_consensus}.

\begin{theorem}\label{theorem_secondorder}
	Consider the network $\mathcal{G} = (\mathcal{V},\mathcal{E})$ with $l$-hop communication, where each normal follower node $i\in \mathcal{W}^\mathcal{N}$ updates its value according to Update Rule~\ref{updaterule2}. Under Assumptions~\ref{secured}--\ref{assumption_path} and the $f$-local Byzantine set $\mathcal{A}$, resilient dynamic leader-follower consensus is achieved as in \eqref{reach_consensus} with consensus error bound $\overline{\epsilon}$ in \eqref{errorbound} if and only if $\mathcal{G}$ is an $(f+1)$-robust following graph with $l$ hops. 
\end{theorem}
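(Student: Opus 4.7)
The plan is to adapt the two-part structure of Theorem~\ref{theorem_firstorder} to the second-order dynamics in \eqref{secondorder}. For necessity, the same construction works: given a failure of the robust following property, let $\mathcal{A}$ realize the witnessing $f$-local set $\mathcal{F}$ and pick the nonempty $\mathcal{S}\subseteq \mathcal{W}^\mathcal{N}\setminus \mathcal{W}_d$ with $|\mathcal{I}_{i,\mathcal{S}}^l|\le f$ for every $i\in\mathcal{S}$. Initialize $\hat{x}_i[0]=a<\hat{x}_d[0]$ and $v_i[0]=0$ for $i\in\mathcal{S}$, and set all other normal states to agree with the leader. If all Byzantine nodes feed $a$ to $\mathcal{S}$ and $\hat{x}_d[k]$ to $\mathcal{N}\setminus\mathcal{S}$ throughout, Algorithm~1 discards every incoming message larger than $\hat{x}_i[k]$ for $i\in\mathcal{S}$ (its MMC has size at most $f$), so $\phi_i[k]=0$, $u_i[k]=-\beta v_i[k]$, and $\hat{x}_i[k]\equiv a$ while $\hat{x}_d[k]$ drifts away, violating \eqref{reach_consensus}.

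For sufficiency, introduce the position error $e_i[k]=\hat{x}_i[k]-\hat{x}_d[k]$ (so $e_d[k]=0$), sort the errors as $\delta_1[k]\le\cdots\le\delta_{N+1}[k]$, and define the layer sets $\overline{\mathcal{X}}_h(k),\underline{\mathcal{X}}_h(k)$ exactly as in \eqref{xhk}. Substituting \eqref{input2} into \eqref{secondorder} yields
\begin{align*}
e_i[k+1] &= e_i[k] + T\bigl(v_i[k]-v_d[k]\bigr), \\
v_i[k+1] &= (1-\beta T)v_i[k] - T\gamma_i[k]\phi_i[k] \\
&\quad - T\theta_i\,\textup{sat}_\epsilon(\phi_i[k]) + T\sigma_i[k],
\end{align*}
with $\phi_i[k]=\sum_j a_{ij}[k](e_i[k]-e_j[k])$ as in \eqref{phiik}. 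The key observation, mirroring Step~2 of the first-order proof, is that whenever $\delta_1[k]\ne\delta_{N+1}[k]$, either a direct normal follower sits at an extreme layer (and by Update Rule~\ref{updaterule2} is driven directly by the leader), or the $(f+1)$-robust following property supplies a boundary follower $i$ in $\overline{\mathcal{X}}_1(k)$ or $\underline{\mathcal{X}}_1(k)$ whose $l$-hop independent paths originating outside its set cannot be fully covered by the $f$-local Byzantine set. In either case, once the extremal gap exceeds $\epsilon_1$, we obtain $|\phi_i[k]|>\epsilon$ with $\textup{sgn}(\phi_i[k])=\textup{sgn}(e_i[k])$.

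To close the convergence argument I would use the composite Lyapunov candidate
\begin{equation*}
V_i[k] = \tfrac{1}{2}\bigl(e_i[k]+\mu v_i[k]\bigr)^2 + \tfrac{\nu}{2}v_i^2[k],
\end{equation*}
with $\mu,\nu>0$ chosen as functions of $T,\beta,\alpha_i,\theta_i$ via \eqref{beta} and a second-order counterpart of \eqref{theta_upper}. A direct computation of $\Delta V_i[k]$, using $|\phi_i[k]|>\epsilon$, $|v_d[k]|\le\overline{v}_d$, and $|\sigma_i[k]|\le\overline{\sigma}_i$, yields $\Delta V_i[k]<0$ on $\{|\phi_i[k]|>\epsilon\}$, so \eqref{deltan+1-reverse} holds in finite time. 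The outer induction on layers $h=1,\dots,N+1$ then proceeds exactly as in Step~3 of Theorem~\ref{theorem_firstorder}: the robust following property supplies a boundary follower that is touched by at least one value from outside its layer, the Lyapunov decrement forces $\delta_{N+2-h}[k]-\delta_{N+1-h}[k]\le\epsilon_h$ (and its symmetric counterpart) in finite time, and once $d\in\overline{\mathcal{X}}_h(k)\cap\underline{\mathcal{X}}_h(k)$ the telescoping of layer bounds gives \eqref{errorbound}.

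The main obstacle I anticipate is the Lyapunov decrement step. In the first-order system, the control enters the error equation directly and the single-term $V_i[k]=e_i^2[k]$ suffices; here the control reaches the position only through an integrator, and the damping $-\beta v_i[k]$ must both prevent velocity blow-up and let the saturation dominate the leader's velocity (hence \eqref{beta}). The cross-term $\mu v_i[k]$ in $V_i[k]$ is what couples the two dynamics, and the parameter ranges of $T,\beta,\mu,\nu$ must be pinned down by algebraic manipulation of $\Delta V_i[k]$ so that the resulting quadratic form is negative definite uniformly in the admissible $v_d[k]$ and $\sigma_i[k]$. Once this decrement lemma is established, the remainder of the proof is a direct transcription of the layered induction from Theorem~\ref{theorem_firstorder}, and the error bound \eqref{errorbound} follows by the same telescoping argument.
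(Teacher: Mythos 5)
Your necessity argument and the outer layered induction match the paper's proof, but the core convergence step is where your proposal has a genuine gap. You reduce everything to the claim that the composite candidate $V_i[k]=\tfrac{1}{2}\bigl(e_i[k]+\mu v_i[k]\bigr)^2+\tfrac{\nu}{2}v_i^2[k]$ satisfies $\Delta V_i[k]<0$ whenever $|\phi_i[k]|>\epsilon$, and you defer the verification to an unspecified algebraic manipulation under ``a second-order counterpart of \eqref{theta_upper}.'' This is precisely the part that does not transcribe from the first-order case, and it is doubtful as stated: a strict per-step decrement of a positive-definite function of $(e_i,v_i)$ on the set $\{|\phi_i[k]|>\epsilon\}$ must hold uniformly over all admissible velocities, including transients where $v_i[k]$ points away from the leader and the position error necessarily grows for several steps; making the cross term absorb this forces conditions of the form $\mu\beta\gtrsim 1$, $\beta T<1$, and bounds coupling $T$, $\gamma_i$, $\theta_i$ that the theorem does not assume --- the only hypothesis on the second-order gains is \eqref{beta}, i.e., $0<\beta\overline{v}_d\le\theta_i$. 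If your decrement lemma needs extra parameter restrictions, you have proved a weaker statement than the one claimed.

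The paper avoids a Lyapunov function here entirely. It observes that $\hat{e}_i[k+1]-\hat{e}_i[k]=(v_i[k]-v_d[k])T$ and argues by cases on the ordering of $v_i[k]$ and $v_d[k]$: when $\hat{e}_i[k]>\epsilon$ and $\phi_i[k]>\epsilon$, condition \eqref{beta} guarantees $u_i[k]=-\gamma_i[k]\phi_i[k]-\theta_i\,\textup{sgn}(\hat{e}_i[k])-\beta v_i[k]<0$ in every subcase with $v_i[k]\ge v_d[k]$, so the velocity decreases until $v_i[\hat{k}]<v_d[\hat{k}]$ at some finite $\hat{k}$, after which the position error is monotonically decreasing (and symmetrically for $\hat{e}_i[k]<-\epsilon$). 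This only claims eventual monotonicity rather than a per-step decrement, which is why \eqref{beta} alone suffices. To repair your proposal you would either need to carry out the quadratic-form computation and show it closes under \eqref{beta} alone (unlikely), or replace the Lyapunov step with this sign/ordering argument. A minor additional point: in your necessity construction, $u_i[k]=-\beta v_i[k]$ keeps $v_i\equiv 0$ only if the disturbance $\sigma_i[k]$ is taken to be zero; since necessity is an existence-of-a-bad-instance argument this is permissible, but it should be stated.
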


\begin{proof}
	\textit{(Necessity)} If $\mathcal{G}$ is not an $(f+1)$-robust following graph with $l$ hops, by following the same reasoning as in the necessity proof of Theorem~\ref{theorem_firstorder}, there is a nonempty subset $\mathcal{S}\subseteq \mathcal{W}^\mathcal{N} \setminus \mathcal{W}_d$ such that \eqref{atmostf} holds.
	
	Suppose that $\hat{x}_i[0]=a , \medspace \forall i\in \mathcal{S}$, and $\hat{x}_j[0]=\hat{x}_d[0], \medspace \forall j\in \mathcal{N}\setminus \mathcal{S}$, where $a< \hat{x}_d[0]$ is a constant and the leader $d$ increases its value $\hat{x}_d[k]$ as time evolves. Moreover, let $v_i[0]=0, \medspace \forall i\in \mathcal{V}$. Assume that at each time $k$, all Byzantine nodes send $a$ and $\hat{x}_d[k]$ to the nodes in $\mathcal{S}$ and $ \mathcal{N}\setminus \mathcal{S}$, respectively. 
	For any normal node $i \in \mathcal{S}$, by Algorithm~1, it removes all the values of neighbors outside $\mathcal{S}$ since the cardinality of the MMC of these values is equal to $f$ or less. Then, according to Update Rule~\ref{updaterule2}, such normal nodes will keep their values, i.e., $\hat{x}_i[k]=a, v_i[k]=0, \medspace \forall i\in \mathcal{S}, \medspace \forall k\geq 0$. Meanwhile, the value of the leader is increasing. Thus, resilient dynamic leader-follower consensus cannot be achieved.

	\textit{(Sufficiency)}
	The sufficiency follows a similar proof as that of Theorem~\ref{theorem_firstorder}. Hence, we provide the crucial parts.
	
	Let $\hat{e}_i[k] = \hat{x}_i[k] -\hat{x}_d[k], \forall i\in \mathcal{V}$. Sort $\hat{e}_i[k]$, $i\in \mathcal{N}$, from the smallest to the largest values and denote them as $\delta_1[k], \delta_2[k], \dots, \delta_{N+1}[k]$. Define the node sets $\overline{\mathcal{X}}_h(k)$ and $\underline{\mathcal{X}}_h(k)$ as in \eqref{xhk} using $\hat{e}_i[k]$. Notice that in this proof, Step 1 is unnecessary and we start with Step 2.

	 \textbf{Step 2:} 
	Following the arguments similar to Step 2 in the proof of Theorem~\ref{theorem_firstorder}, we conclude that there must exist a normal follower $i\in \mathcal{W}^\mathcal{N}$ either in $\overline{\mathcal{X}}_1(k)$ or $\underline{\mathcal{X}}_1(k)$ which makes an update using a value from the normal nodes outside the set to which it belongs.
	
	According to \eqref{xhk}, if $i\in\overline{\mathcal{X}}_1(k)$ or $i\in\underline{\mathcal{X}}_1(k)$, it implies that $\hat{e}_i[k] = \delta_{N+1}[k]$ or $\hat{e}_i[k] = \delta_{1}[k]$, respectively. Furthermore, if (this can happen when $|\overline{\mathcal{X}}_1(k)| = 1$ if $i\in\overline{\mathcal{X}}_1(k)$ and when $|\underline{\mathcal{X}}_1(k)| = 1$ if $i\in\underline{\mathcal{X}}_1(k)$)
	\begin{align}\label{deltan+1_second}
		\delta_{N+1}[k]-\delta_{N}[k] &> \epsilon_1, \medspace\medspace\medspace \textup{if} \medspace i\in \overline{\mathcal{X}}_1(k), \nonumber \\
		\delta_{2}[k]-\delta_{1}[k] &> \epsilon_1, \medspace\medspace\medspace \textup{if} \medspace i\in \underline{\mathcal{X}}_1(k),
	\end{align}
	and we have shown that there is a node $i$ either in $\overline{\mathcal{X}}_1(k)$ or $\underline{\mathcal{X}}_1(k)$ using a value from the normal nodes outside the set to which it belongs, then one can conclude that the distance of $\hat{e}_i[k]$ from the closest normal $\hat{e}_j[k]$ is larger than $\epsilon_1$. Moreover, since $\mathcal{A}$ is $f$-local, any Byzantine neighbor with consensus error outside the interval $[\delta_{1}[k], \delta_{N+1}[k]]$ is discarded by the MW-MSR algorithm. Therefore, by \eqref{phiik_second} and \eqref{deltan+1_second}, for any possible $a_{ij}[k]\geq0$ obtained in \eqref{phiik_second}, we have
	\begin{align}\label{phiikabsolute_second}
		\phi_i[k]> \epsilon, \medspace\medspace\medspace &\textup{if} \medspace i\in \overline{\mathcal{X}}_1(k), \nonumber \\
		\phi_i[k]< -\epsilon, \medspace\medspace\medspace &\textup{if} \medspace i\in \underline{\mathcal{X}}_1(k)  .
	\end{align}
	Since $|\phi_i[k]| > \epsilon$, we get $\textup{sat}_\epsilon(\phi_i [k] ) = \textup{sgn}(\phi_i[k])$. Then, from \eqref{x1} and \eqref{phiikabsolute_second}, it holds that
	\begin{align*}
		\textup{sat}_\epsilon(\phi_i [k] ) = \textup{sgn}(\phi_i[k])=\textup{sgn}(\hat{e}_i[k]) .
	\end{align*}
	
	In the following, we show that in a finite time, $\hat{e}_i[k]$ is decreasing if $\hat{e}_i[k]>0$, and $\hat{e}_i[k]$ is increasing if $\hat{e}_i[k]<0$.

	(i) If $\hat{e}_i[k]>0$, then by \eqref{phiikabsolute_second}, we have $\phi_i[k]> \epsilon$. We know from \eqref{secondorder} that 
	\begin{align}
		\hat{e}_i[k+1]-\hat{e}_i[k] = (v_i[k] - v_d[k])T ,
	\end{align}
	where $\hat{e}_i[k]=\hat{x}_i[k]-\hat{x}_d[k], i\in \mathcal{V}$. When $v_i[k] < v_d[k]$, we have  $\hat{e}_i[k+1]-\hat{e}_i[k]<0$, i.e., $\hat{e}_i[k]$ is decreasing.
	Then we discuss the following three cases for $v_d[k] \leq v_i[k]$.
	
	Case 1: $0 < v_d[k] \leq v_i[k]$. In this case, by \eqref{input2}, we get
	\begin{align}\label{uik_increasing}
		u_i [k] = -\gamma_i[k] \phi_i [k] - \theta_i \textup{sgn}(\hat{e}_i[k]) -\beta v_i[k]<0 ,
	\end{align}
	indicating that $v_i[k]$ is decreasing until $0<v_i[\hat{k}] < v_d[\hat{k}]$ for some $\hat{k}$. Then we have $\hat{e}_i[k]$ is decreasing for $k\geq \hat{k}$.
	
	Case 2: $ v_d[k]\leq 0 \leq v_i[k]$. In this case, similar to \eqref{uik_increasing}, we get $v_i[k]$ is decreasing until $v_d[\hat{k}]\leq  v_i[\hat{k}] <0$ for some $\hat{k}$.
	
	Case 3: $v_d[k]\leq  v_i[k] <0$. In this case, recall that $\phi_i[k] > \epsilon$.
	Moreover, since $i\in\overline{\mathcal{X}}_1(k)$ or $i\in\underline{\mathcal{X}}_1(k)$, it must hold that $\hat{e}_i[k] > \epsilon$.
	Moreover, by \eqref{beta}, we have
	\begin{align*}
		0< \beta |v_d[k] | \leq \beta \overline{v}_d  \leq \theta_i .
	\end{align*}
	Then we obtain
	\begin{align}
		- \theta_i \textup{sgn}(\hat{e}_i[k]) -\beta v_i[k]< - \theta_i \textup{sgn}(\hat{e}_i[k]) -\beta v_d[k]< 0 .
	\end{align}
	It further follows that
	\begin{align}
		u_i [k] = -\gamma_i[k] \phi_i [k] - \theta_i \textup{sgn}(\hat{e}_i[k]) -\beta v_i[k]<0 ,
	\end{align}
	indicating that $v_i[k]$ is decreasing until $v_i[\hat{k}] < v_d[\hat{k}]$ for some $\hat{k}$. Then we have $\hat{e}_i[k]$ is decreasing for $k\geq \hat{k}$.

	(ii) If $\hat{e}_i[k]<0$, then by \eqref{phiikabsolute_second} we have $\phi_i[k]<- \epsilon$. We know from \eqref{secondorder} that 
	\begin{align}
		\hat{e}_i[k+1]-\hat{e}_i[k] = (v_i[k] - v_d[k])T .
	\end{align}
	When $v_d[k] < v_i[k]$, we have  $\hat{e}_i[k+1]-\hat{e}_i[k]>0$, i.e., $\hat{e}_i[k]$ is increasing.
	Then we discuss the following three cases for $v_i[k] \leq v_d[k]$.
	
	Case 1: $v_i[k] \leq v_d[k] < 0 $. In this case, we have
	\begin{align}\label{uik_decreasing}
		u_i [k] = -\gamma_i[k] \phi_i [k] - \theta_i \textup{sgn}(\hat{e}_i[k]) -\beta v_i[k] >0 ,
	\end{align}
	indicating that $v_i[k]$ is increasing until $v_d[\hat{k}] < v_i[\hat{k}]<0$ for some $\hat{k}$, then we have $\hat{e}_i[k]$ is increasing for $k\geq \hat{k}$.
	
	Case 2: $ v_i[k]\leq 0 \leq v_d[k]$. In this case, similar to \eqref{uik_decreasing}, we have $v_i[k]$ is increasing until $0 <v_i[\hat{k}]\leq  v_d[\hat{k}] $ for some $\hat{k}$.
	
	Case 3: $0<v_i[k]\leq  v_d[k]$. In this case, recall that $\phi_i[k] <- \epsilon$.
	Moreover, since $i\in\overline{\mathcal{X}}_1(k)$ or $i\in\underline{\mathcal{X}}_1(k)$, it must hold that $\hat{e}_i[k] <- \epsilon$.
	Moreover, by \eqref{beta}, we have
	\begin{align*}
		0< \beta |v_d[k] | \leq \beta \overline{v}_d  \leq \theta_i .
	\end{align*}
	Then we obtain
	\begin{align}
		- \theta_i \textup{sgn}(\hat{e}_i[k]) -\beta v_i[k]> - \theta_i \textup{sgn}(\hat{e}_i[k]) -\beta v_d[k]> 0 .
	\end{align}
	It further follows that
	\begin{align}
		u_i [k] = -\gamma_i[k] \phi_i [k] - \theta_i \textup{sgn}(\hat{e}_i[k]) -\beta v_i[k]>0 ,
	\end{align}
	indicating that $v_i[k]$ is increasing until $v_d[\hat{k}] < v_i[\hat{k}]$ for some $\hat{k}$. Then we have $\hat{e}_i[k]$ is increasing for $k\geq \hat{k}$.

	Therefore, we conclude that while $\hat{e}_i[k]>0$ and $\phi_i[k]> \epsilon$, $\hat{e}_i[\hat{k}]$ will always be decreasing after a finite time $\hat{k}$. Moreover, while $\hat{e}_i[k]<0$ and $\phi_i[k]< -\epsilon$, $\hat{e}_i[\hat{k}]$ will always be increasing after a finite time $\hat{k}$.
	Then, in a finite time, we have
	\begin{align*}
		\delta_{N+1}[k]-\delta_{N}[k] &\leq \epsilon_1, \medspace\medspace\medspace \textup{if} \medspace i\in \overline{\mathcal{X}}_1(k),  \\
		\delta_{2}[k]-\delta_{1}[k] &\leq \epsilon_1, \medspace\medspace\medspace \textup{if} \medspace i\in \underline{\mathcal{X}}_1(k)  .
	\end{align*}

	\vspace{-4mm}
	 \textbf{Step 3:} 
	Then we utilize an analysis similar to Step 3 in the proof of Theorem~\ref{theorem_firstorder}. By considering the ultimate error bounds obtained in steps 1 to $h- 1$, for step $h$, in a finite time we should have for each node $i$
	\begin{align*}
		\delta_{N+2-h}[k]-\delta_{N+1-h}[k] &\leq \epsilon_h, \medspace\medspace\medspace \textup{if} \medspace i\in \overline{\mathcal{X}}_h(k),  \\
		\delta_{h+1}[k]-\delta_{h}[k] &\leq \epsilon_h, \medspace\medspace\medspace \textup{if} \medspace i\in \underline{\mathcal{X}}_h(k)  .
	\end{align*}
	Lastly, if for a $1 < h \leq N + 1$, $d \in \overline{\mathcal{X}}_h(k)\cap\underline{\mathcal{X}}_h(k)$; then, by putting the error bounds together, we obtain the consensus error bound in \eqref{errorbound}. The proof is complete. 
\end{proof}

We state a smaller bound for the case of $\mathcal{W}_d=\mathcal{W}$ in the following lemma. It can be derived from Step 2 of the proof of Theorem~\ref{theorem_secondorder}.

\begin{lemma}\label{lemma_secondorder}
	Consider the network $\mathcal{G} = (\mathcal{V},\mathcal{E})$ with the problem settings and assumptions in Theorem~\ref{theorem_secondorder}. Resilient dynamic leader-follower consensus is achieved as in \eqref{reach_consensus} with consensus error bound $\overline{\epsilon}_{d2}=\epsilon$ if all followers are in $\mathcal{W}_d$, i.e., $\mathcal{W}_d=\mathcal{W}$.
\end{lemma}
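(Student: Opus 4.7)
The plan is to specialize Step 2 of the proof of Theorem~\ref{theorem_secondorder} to the case $\mathcal{W}_d=\mathcal{W}$, which removes the need for Step 3 entirely. Since every normal follower $i\in\mathcal{W}^\mathcal{N}\subseteq\mathcal{W}_d$ now reads the leader's value directly, Update Rule~\ref{updaterule2} gives $\phi_i[k]=\hat{x}_i[k]-\hat{x}_d[k]=\hat{e}_i[k]$. Hence the quantity used inside the saturation is exactly the consensus error itself, and there is no layered propagation of the $\epsilon$-threshold through a chain of non-direct followers. This is precisely what causes the accumulated bound $\overline{\epsilon}=\sum_{h=1}^N \epsilon_h$ to collapse to the single threshold $\epsilon$.

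First, I would fix any node $i\in\mathcal{W}^\mathcal{N}$ and consider the regime $|\hat{e}_i[k]|>\epsilon$. Since $\phi_i[k]=\hat{e}_i[k]$, we have $|\phi_i[k]|>\epsilon$ and therefore $\textup{sat}_\epsilon(\phi_i[k])=\textup{sgn}(\phi_i[k])=\textup{sgn}(\hat{e}_i[k])$. Next I would simply re-run the three-case analysis of Step 2 in the proof of Theorem~\ref{theorem_secondorder} (the cases that split on the signs/magnitudes of $v_i[k]$ relative to $v_d[k]$ and zero), invoking the parameter condition \eqref{beta}, namely $\beta\overline{v}_d\le\theta_i$, exactly as before. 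The conclusion carried over from that argument is that whenever $\hat{e}_i[k]>\epsilon$ there exists a finite $\hat{k}$ after which $v_i[k]<v_d[k]$, so $\hat{e}_i[k+1]-\hat{e}_i[k]=T(v_i[k]-v_d[k])<0$, i.e., $\hat{e}_i[k]$ is strictly decreasing; symmetrically, whenever $\hat{e}_i[k]<-\epsilon$, $\hat{e}_i[k]$ is strictly increasing after a finite time.

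Combining the two monotonicity statements, once $|\hat{e}_i[k]|>\epsilon$ the error is driven toward the interval $[-\epsilon,\epsilon]$, and a finite time $\bar{k}_i$ must exist at which $|\hat{e}_i[k]|\le\epsilon$. Finally I would note that inside the saturation region, the Byzantine agents' messages with values outside $[\delta_1[k],\delta_{N+1}[k]]$ are still discarded by Algorithm~1 under the $(f+1)$-robust following graph hypothesis, so no normal node can be pushed back outside $[-\epsilon,\epsilon]$. Taking $\bar{k}=\max_{i\in\mathcal{W}^\mathcal{N}}\bar{k}_i$ gives $|x_i[k]-x_d[k]|=|\hat{e}_i[k]|\le\epsilon=\overline{\epsilon}_{d2}$ for all $i\in\mathcal{N}$ and all $k\ge\bar{k}$, verifying \eqref{reach_consensus}.

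The only subtlety I anticipate is keeping the reasoning clean across the three velocity cases when $v_i[k]$ and $v_d[k]$ have differing signs, since one must argue that even in the adversarial worst case the velocity $v_i[k]$ is pushed below $v_d[k]$ (or above, symmetrically) in finite time rather than oscillating indefinitely. However, this is exactly what was already established in Step 2 of Theorem~\ref{theorem_secondorder} under \eqref{beta}, so the present lemma requires no new technical content beyond observing that the error bound reduces to $\epsilon$ because the Step 3 layered induction is vacuous.
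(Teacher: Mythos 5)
Your proposal is correct and follows essentially the same route as the paper, which simply notes that the lemma "can be derived from Step 2 of the proof of Theorem~\ref{theorem_secondorder}": you specialize that step by observing $\phi_i[k]=\hat{e}_i[k]$ for direct followers, reuse the three-case velocity analysis under \eqref{beta}, and note that the Step-3 layered induction is vacuous so the bound collapses to $\epsilon$. One minor remark: when $\mathcal{W}_d=\mathcal{W}$, direct followers never invoke Algorithm~1 at all (they use the secure leader's value directly), so your appeal to the MW-MSR filtering of Byzantine messages is unnecessary rather than wrong.
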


\begin{remark}\label{smallerbound2}
	Similar to Remark~\ref{smallerbound1}, when $|\mathcal{W}_d\cap \mathcal{N}|<N$, we can have a smaller consensus error bound $\overline{\epsilon}_2<\overline{\epsilon}$ in \eqref{errorbound} for Theorem~\ref{theorem_secondorder}, which is given by
	\begin{align}\label{errorbound_smaller2}
		\overline{\epsilon}_2 = \sum_{h =1}^{N-|\mathcal{W}_d\cap \mathcal{N}|} \epsilon_h + \epsilon   .
	\end{align}
\end{remark}

\subsection{Analysis on Insecure Leader Agents and Discussions on Related Works}\label{discussion_notsecured}

In this section, we study more vulnerable systems with insecure leaders, i.e., a leader may be subject to failures or attacks. There are two important assumptions in this setting.
First, there must be redundancy in the set of leaders; see, e.g., \cite{usevitch2020resilient}.
Second, all normal leaders should broadcast the same reference value at each time; see footnote 1.

In our problem with insecure leaders, we consider the graph $\mathcal{G} = (\mathcal{V},\mathcal{E})$, where $\mathcal{V}$ consists of the set of leader agents $\mathcal{L}$ and the set of follower agents $\mathcal{W}$ with $\mathcal{L}\cup \mathcal{W}=\mathcal{V}$ and $\mathcal{L}\cap \mathcal{W}=\emptyset$. Moreover, the set of normal leaders is denoted by $\mathcal{L}^\mathcal{N}=\mathcal{L}\cap \mathcal{N}$ with $\mathcal{L}^\mathcal{N} \cup \mathcal{W}^\mathcal{N}=\mathcal{N}$. Here, $\mathcal{L}^\mathcal{N}$ may not be $\mathcal{L}$, i.e., there could be adversarial leaders.
We formulate this problem as follows. 
We remark that this problem is the same with the case where secure leaders are unknown to the followers.

\begin{problem}\label{problem2}
	Suppose that all leaders in $\mathcal{L}$ are insecure. 
	Design a distributed control strategy such that the normal agents in $\mathcal{N}$ reach resilient dynamic leader-follower consensus, i.e.,
	for any possible sets and behaviors of the adversaries in $\mathcal{A}$ and any state values of the normal agents in $\mathcal{N}$, \eqref{reach_consensus} is satisfied $\forall d \in \mathcal{L}^\mathcal{N}$.
\end{problem}

To solve Problem~\ref{problem2}, we slightly modify our algorithms for first-order and second-order MASs as follows:

\textbf{Modified Update Rule~\ref{updaterule1}:} Each normal follower agent $i\in \mathcal{W}^\mathcal{N}$ updates its $\phi_i [k]$ according to Algorithm~1, and follows \eqref{system1} with \eqref{input1}.

\textbf{Modified Update Rule~\ref{updaterule2}:} Each normal follower agent $i\in \mathcal{W}^\mathcal{N}$ updates its $\phi_i [k]$ according to Algorithm~1 with inputs $\hat{x}_i[k]$, and follows \eqref{secondorder} with \eqref{input2}.

We note that the nodes in $\mathcal{W}_d$ in Problem~\ref{problem} can be viewed as the insecure leaders in $\mathcal{L}$ for the rest of the follower agents in Problem~\ref{problem2}. The difference is that the nodes in $\mathcal{W}_d$ would have a small consensus error $\epsilon$ to the reference value, while the normal leaders in $\mathcal{L}$ share exactly the correct reference value by our assumption. Therefore, in Problem~\ref{problem2}, we introduce the following graph notion.

\begin{definition}\label{robust_following_leaders}
	The graph $\mathcal{G} = (\mathcal{V},\mathcal{E})$ with the set of leaders $\mathcal{L} \subset \mathcal{V}$ is said to be an $r$-robust following graph with $l$ hops (under the $f$-local model) w.r.t. $\mathcal{L}$ if for any $f$-local set $\mathcal{F}$, the subgraph $\mathcal{G}_{\mathcal{H}}$ with $\mathcal{H}=\mathcal{V}\setminus\mathcal{F}$ satisfies that for every nonempty subset $\mathcal{S}\subseteq \mathcal{H}\setminus \mathcal{L}$, \eqref{zrs} holds.
\end{definition}

We are ready to state the result for solving Problem~\ref{problem2}. Since it can be proved using analyses similar to those in the proofs of Theorems~\ref{theorem_firstorder} and \ref{theorem_secondorder}, we omit its proof.
Note that for the case of insecure leaders, the smaller error bounds in Lemmas~\ref{lemma_firstorder} and \ref{lemma_secondorder} also hold, but the ones in Remarks~\ref{smallerbound1} and \ref{smallerbound2} may not hold since followers in $\mathcal{W}_d$ may not use the leaders' value in their updates.

\begin{proposition}\label{proposition_secure}
	Consider the network $\mathcal{G} = (\mathcal{V},\mathcal{E})$ with the set of insecure leaders $\mathcal{L}$, where all normal followers in $\mathcal{W}^\mathcal{N}$ update their values according to the modified Update Rules~\ref{updaterule1} or \ref{updaterule2}. Under Assumptions~\ref{leader-speed}--\ref{assumption_path} and the $f$-local Byzantine set $\mathcal{A}$, resilient dynamic leader-follower consensus with insecure leaders is achieved with consensus error bound $\overline{\epsilon}$ in \eqref{errorbound} if and only if $\mathcal{G}$ is an $(f+1)$-robust following graph with $l$ hops w.r.t. $\mathcal{L}$.
\end{proposition}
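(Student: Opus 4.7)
The plan is to mirror the structure of Theorems~\ref{theorem_firstorder} and \ref{theorem_secondorder} almost verbatim, with the set of normal leaders $\mathcal{L}^\mathcal{N}$ playing the role formerly played by the secure leader $\{d\}$ together with the direct followers $\mathcal{W}_d$. The crucial observation is that, by our standing assumption above Problem~\ref{problem2}, every normal leader in $\mathcal{L}^\mathcal{N}$ broadcasts the same reference $x_d[k]$ at each time; hence if we fix an arbitrary $d\in\mathcal{L}^\mathcal{N}$ and set $e_i[k]=x_i[k]-x_d[k]$ (or $\hat{e}_i[k]=\hat{x}_i[k]-\hat{x}_d[k]$ in the second-order case), the errors of all normal leaders are identically zero and they collectively behave as a single ``anchor'' for the MW-MSR iterations of the followers. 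Since the followers now apply Algorithm~1 uniformly, the dichotomy between direct and non-direct followers in the proofs of Theorems~\ref{theorem_firstorder} and \ref{theorem_secondorder} disappears, and the argument actually becomes slightly cleaner.

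For necessity, I would contrapose: if $\mathcal{G}$ fails to be an $(f+1)$-robust following graph with $l$ hops w.r.t.\ $\mathcal{L}$, choose an $f$-local set $\mathcal{F}$ witnessing the failure and declare $\mathcal{A}=\mathcal{F}$. By Definition~\ref{robust_following_leaders}, the subgraph $\mathcal{G}_{\mathcal{H}}$ contains a nonempty $\mathcal{S}\subseteq\mathcal{H}\setminus\mathcal{L}=\mathcal{W}^\mathcal{N}$ with $|\mathcal{I}_{i,\mathcal{S}}^l|\leq f$ for every $i\in\mathcal{S}$. Now replay the adversary construction used in the necessity proof of Theorem~\ref{theorem_firstorder} (respectively, Theorem~\ref{theorem_secondorder}): initialize every $i\in\mathcal{S}$ at a constant $a$ strictly below $x_d[0]$, initialize every $j\in\mathcal{N}\setminus\mathcal{S}$ at $x_d[0]$, let the normal leaders push $x_d[k]$ upward over time, and have each Byzantine node transmit $a$ along every path reaching a node of $\mathcal{S}$ and $x_d[k]$ along every path reaching the remainder of $\mathcal{N}$. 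The bound $|\mathcal{I}_{i,\mathcal{S}}^l|\leq f$ guarantees that every exterior normal message reaching $i\in\mathcal{S}$ can be covered by the $f$ permitted exclusions of Algorithm~1, so the nodes in $\mathcal{S}$ stay stuck at $a$ while the reference drifts, violating \eqref{reach_consensus}.

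For sufficiency, list the normal errors $e_i[k]$, $i\in\mathcal{N}$, in nondecreasing order as $\delta_1[k]\leq\cdots\leq\delta_{|\mathcal{N}|}[k]$, define $\overline{\mathcal{X}}_h(k)$ and $\underline{\mathcal{X}}_h(k)$ exactly as in \eqref{xhk}, and observe that $\delta_1[k]\leq 0\leq\delta_{|\mathcal{N}|}[k]$ because every normal leader has error zero. The induction over $h$ then proceeds as in Theorems~\ref{theorem_firstorder} and \ref{theorem_secondorder}: as long as no normal leader lies in $\overline{\mathcal{X}}_h(k)\cap\underline{\mathcal{X}}_h(k)$, the relevant extreme set is a nonempty subset of $\mathcal{N}\setminus\mathcal{L}^\mathcal{N}\subseteq\mathcal{W}^\mathcal{N}$, so by Definition~\ref{robust_following_leaders} applied with $\mathcal{F}=\mathcal{A}$ there is a node $i$ in that set admitting $f+1$ independent $l$-hop paths from outside. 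The $f$-locality of $\mathcal{A}$ then forces $i$ to accept at least one message from a normal node in the opposite or interior layer, so $|\phi_i[k]|>\epsilon$ whenever the layer gap exceeds $\epsilon_h$, and the Lyapunov estimate of Step~2 of Theorem~\ref{theorem_firstorder} (respectively, the velocity-balancing case analysis of Step~2 of Theorem~\ref{theorem_secondorder}) transfers without change. Stacking the layer bounds yields \eqref{errorbound}.

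The only real subtlety I anticipate is bookkeeping: the $|\mathcal{N}|$ sorted errors now contain $|\mathcal{L}^\mathcal{N}|$ zero entries at the anchor rather than just one, so I would verify that the largest index $h$ for which a normal leader can still lie outside $\overline{\mathcal{X}}_h(k)\cap\underline{\mathcal{X}}_h(k)$ is at most $N=|\mathcal{W}^\mathcal{N}|$, keeping the summation in \eqref{errorbound} truncated at $N$ rather than $|\mathcal{N}|-1$. This is immediate since once $\overline{\mathcal{X}}_h(k)$ or $\underline{\mathcal{X}}_h(k)$ grows past $|\mathcal{W}^\mathcal{N}|$ it must intersect $\mathcal{L}^\mathcal{N}$ and trigger the terminal case. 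Once this accounting is checked, the remainder of the proof is essentially a transcription of the earlier two theorems.
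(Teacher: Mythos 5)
Your proposal is correct and follows exactly the route the paper intends: the authors omit the proof of Proposition~1, stating only that it follows by analyses similar to Theorems~\ref{theorem_firstorder} and \ref{theorem_secondorder}, and your argument is precisely that adaptation, with $\mathcal{L}^\mathcal{N}$ serving as the zero-error anchor in place of $\{d\}\cup\mathcal{W}_d$ and with the direct/non-direct follower dichotomy collapsing because all followers now run Algorithm~1. Your bookkeeping remark on why the layer count still truncates at $N=|\mathcal{W}^\mathcal{N}|$ is also the right check to make.
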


\section{Numerical Examples}\label{sec_example}


\subsection{Simulations for a First-Order MAS}\label{sec_example1}

We apply Update Rule~\ref{updaterule1} to the leader-follower network in Fig.~\ref{15node} under the $2$-local model. The control input of the leader is given as $u_d[k] = 3 \cos( kT ) + 1.5 \cos(0.5kT)$ with sampling period $T=0.01$ and the external disturbances are bounded with $\overline{\sigma}_i = 1, \forall i \in \mathcal{W}^\mathcal{N}$ (generated using random signals). The objective of normal follower nodes is to track the leader $d$. Accordingly, let $\alpha_i = 1$, $\epsilon = 0.1$, and $\theta_i = 6$.
Moreover, let $x_i[0]\in (0,30), \forall i \in \mathcal{W}^\mathcal{N}$.

\begin{figure}[t]
	\centering
	\subfigure[\scriptsize{With one-hop communication.}]{
		\includegraphics[width=3.4in,height=1.5in]{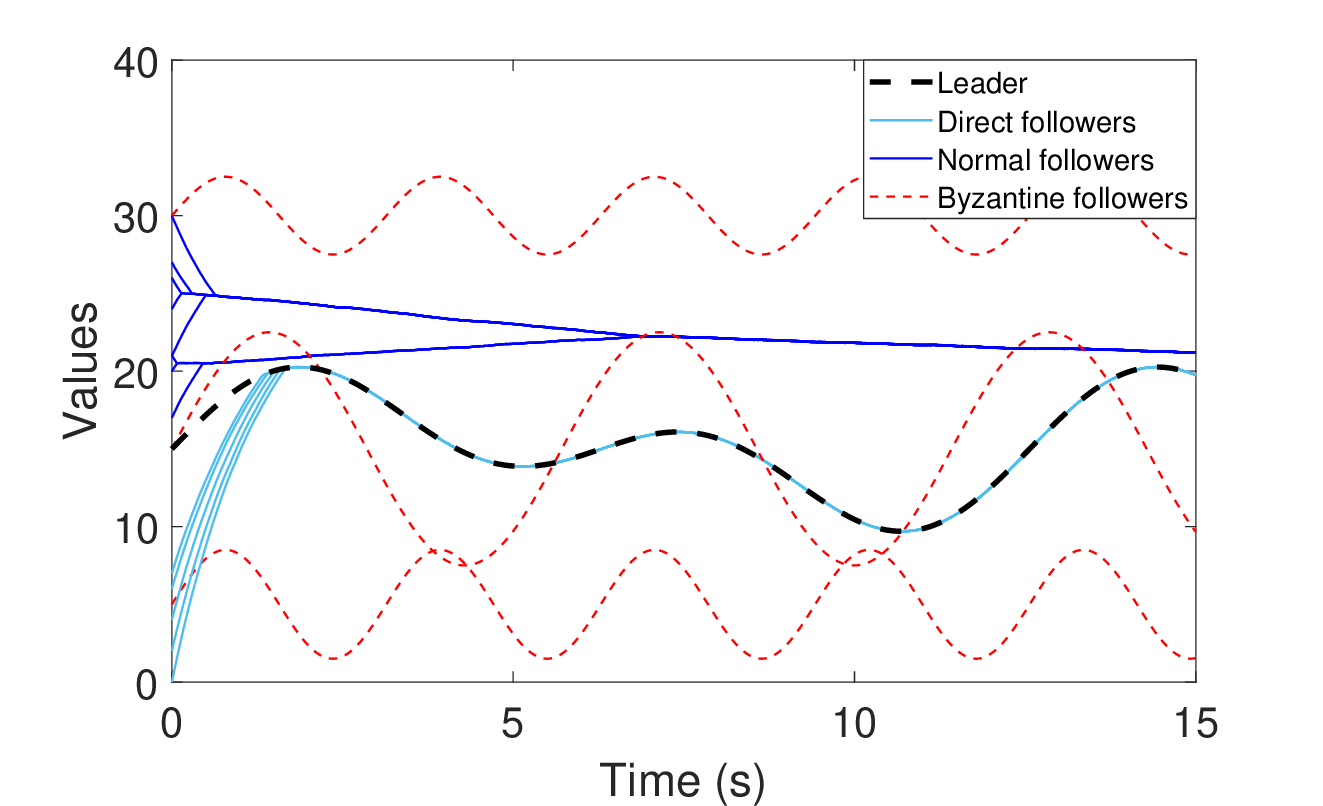}
	}
	\vspace{-9pt}
	
	\subfigure[\scriptsize{With three-hop communication.}]{
		\includegraphics[width=3.4in,height=1.5in]{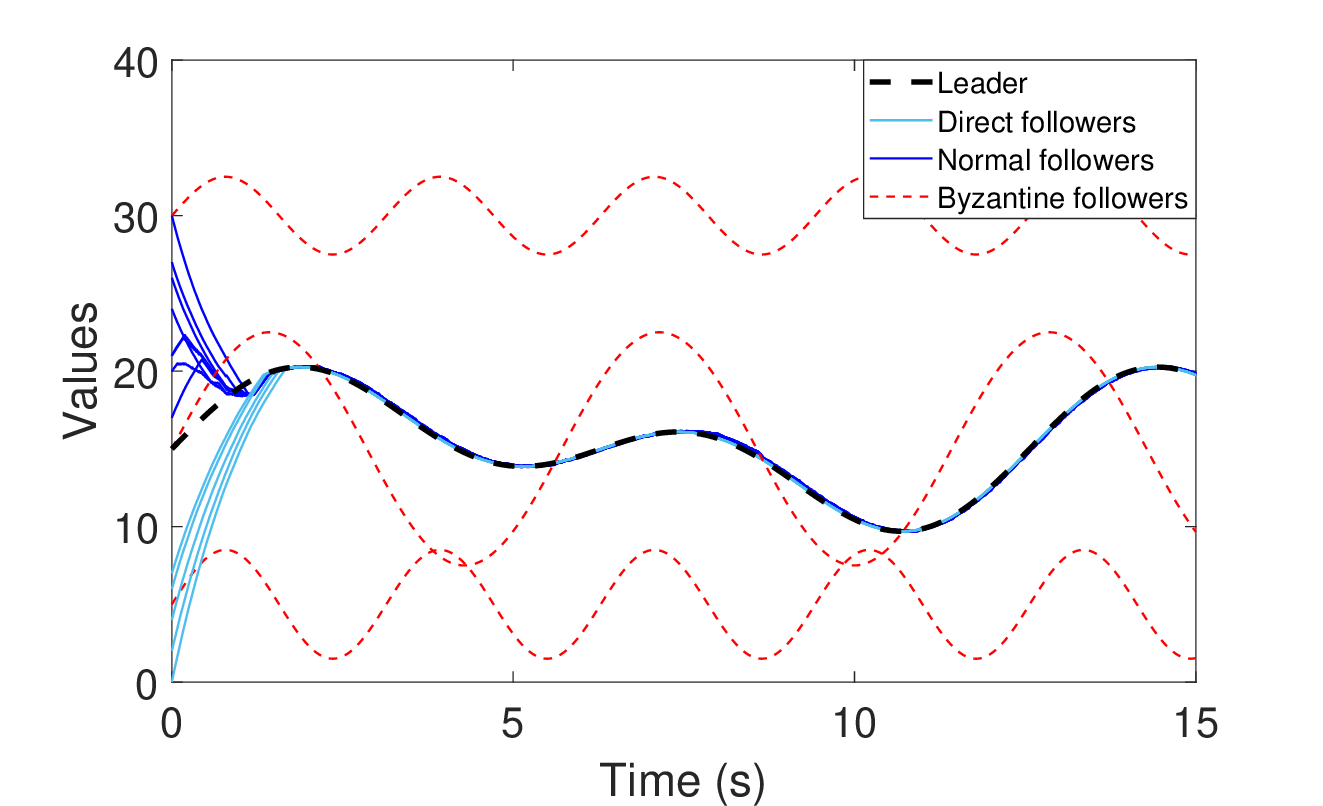}
	}
	\vspace{-9pt}
	\caption{Nodes' values of the network in Fig.~\ref{15node} applying Update Rule~\ref{updaterule1}.}
	\label{15nodevalue}
\end{figure}

This network is not a $3$-robust following graph with $1$ hop. Hence, it is not robust enough for the one-hop MSR algorithms (\cite{leblanc2013resilient,rezaee2021resiliency}) to succeed under the $2$-local model. However, as mentioned in Example~\ref{discussion15node}, it is a $3$-robust following graph with $3$ hops.
We assume that nodes 8 and 9 are Byzantine. Specifically, node 8 sends an oscillating value around 15 to out-neighbors in node set $\{1, 2\}$ and sends an oscillating value around 5 to out-neighbors in set $\{3, 4, 5\}$. Moreover, node 9 sends an oscillating value around 30 to its neighbors.
First, we present the case where normal followers apply Update Rule~\ref{updaterule1} with one-hop communication, which is similar to the algorithm in \cite{rezaee2021resiliency}. 
The results are given in Fig.~\ref{15nodevalue}(a), and resilient dynamic leader-follower consensus is not achieved.  
Next, we apply Update Rule~\ref{updaterule1} with three-hop communication to this network. We assume that Byzantine nodes manipulate the relayed values in the same way as they manipulate their own values.
Observe in Fig.~\ref{15nodevalue}(b) that resilient dynamic leader-follower consensus is achieved; the maximum consensus error after time $k=500$ (i.e., 5 seconds) is $\overline{\epsilon} =0.37$, verifying the result in Theorem~\ref{theorem_firstorder}.

\begin{figure}[t]
	\centering
	\subfigure[\scriptsize{With one-hop communication.}]{
		\includegraphics[width=3.4in,height=1.5in]{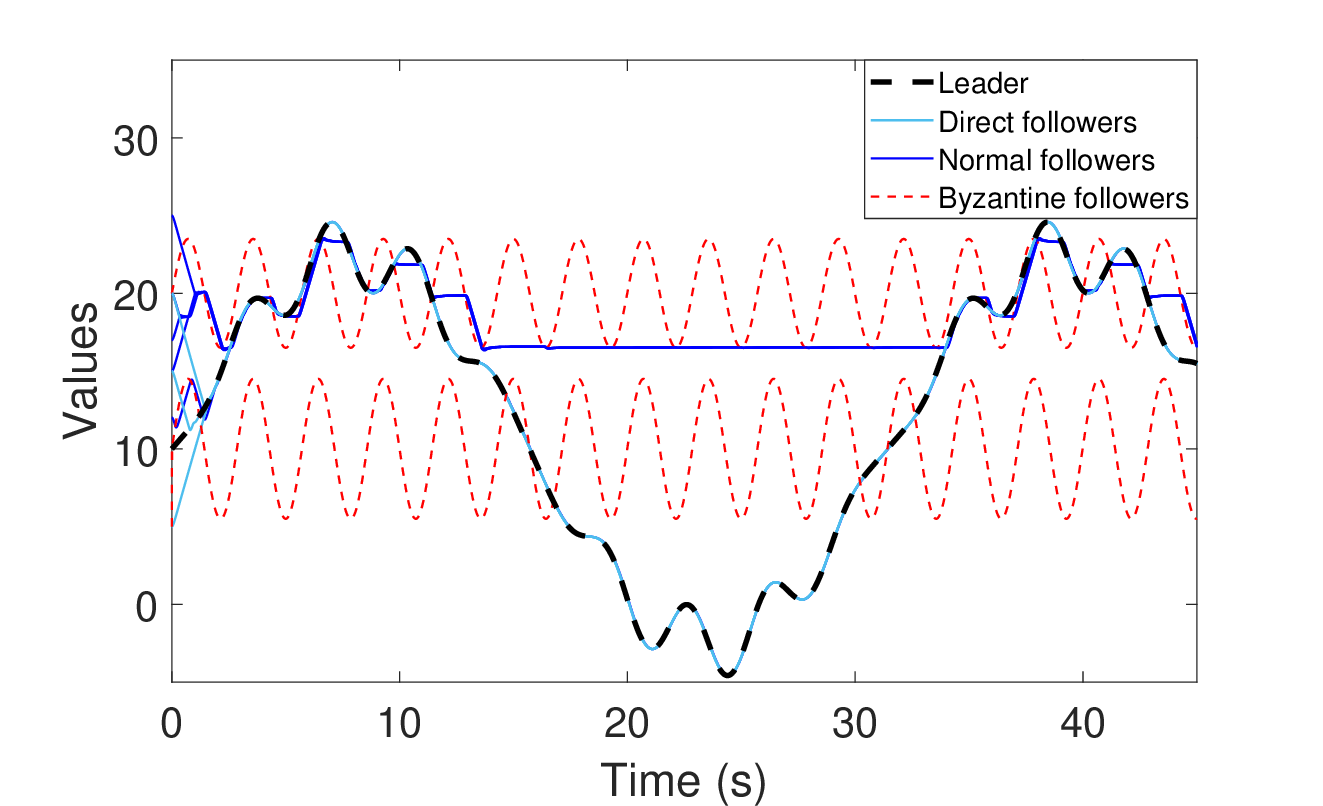}
	}
	\vspace{-9pt}
	
	\subfigure[\scriptsize{With two-hop communication.}]{
		\includegraphics[width=3.4in,height=1.5in]{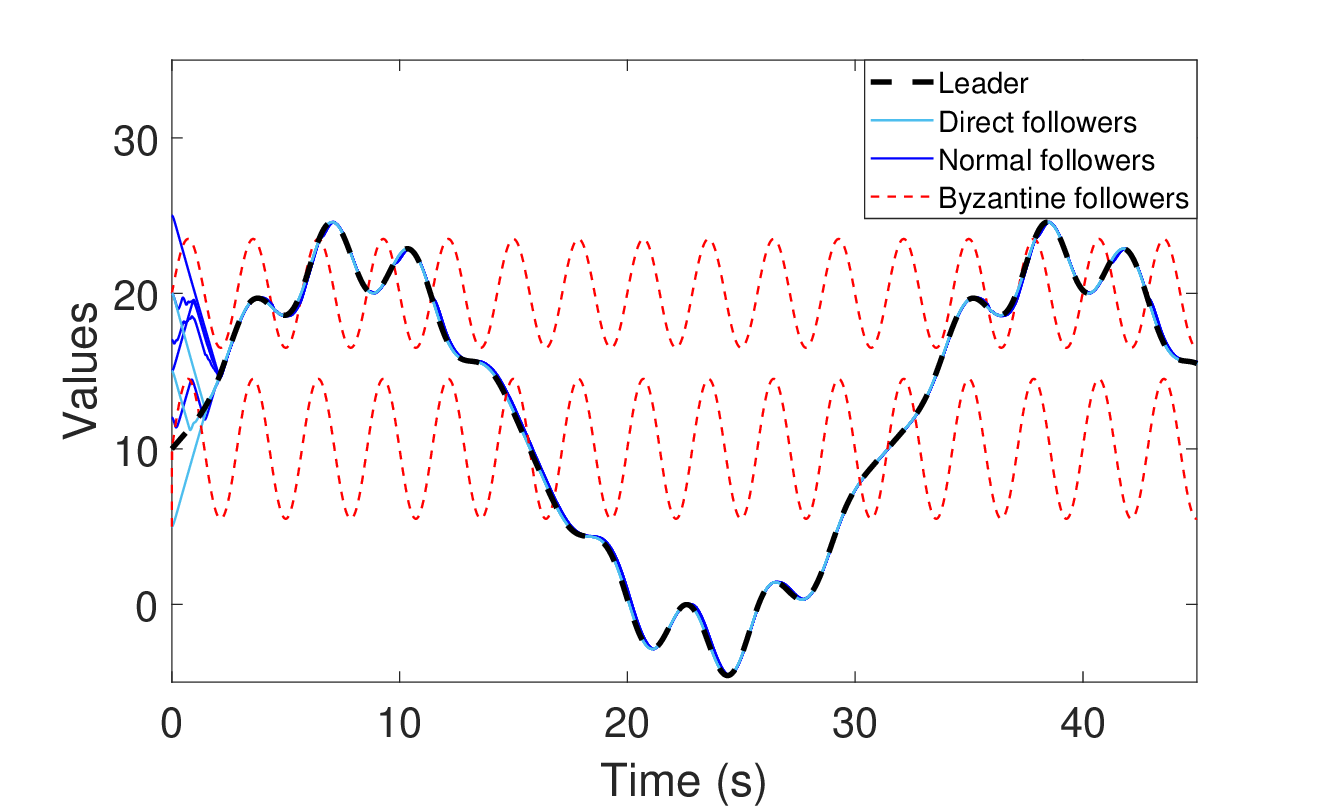}
	}
	\vspace{-9pt}
	\caption{Nodes' values of the network in Fig.~\ref{9node}(a) applying Update Rule~\ref{updaterule2}.}
	\label{9nodevalue}
\end{figure}

\subsection{Simulations for a Second-Order MAS}\label{sec_example2}

In this part, we present simulations of Update Rule~\ref{updaterule2} in the leader-follower network in Fig.~\ref{9node}(a) under the $1$-local model. Here, all agents in $\mathcal{V}$ possess second-order dynamics as \eqref{secondorder} and \eqref{secondorder-leader}. Scenario~1 presents a one-dimensional tracking problem. Scenario~2 presents a two-dimensional resilient formation control problem.


\textit{Scenario~1:}
The control input of the leader is given as $u_d[k] = -3.2 \sin( 2kT ) + 1.92 \sin(1.6kT) -0.25 \sin( 0.2kT )$ with $T=0.005$, and the external disturbances are random and bounded with $\overline{\sigma}_i = 1, \forall i \in \mathcal{W}^\mathcal{N}$. The objective of normal followers is to track the leader $d$. In Scenario~1, let $\rho_i=0, \forall i\in \mathcal{V}$.
Moreover, let $\alpha_i = 1$, $\epsilon = 0.1$, $\beta = 20$, and $\theta_i = 100$.
Normal followers have initial values as $\hat{x}_i[0]=x_i[0]\in (0,30), \forall i \in \mathcal{W}^\mathcal{N}$.

As discussed in Example~\ref{discussion9node}, the graph in Fig.~\ref{9node}(a) is not a $2$-robust following graph with $1$ hop, and hence, is not robust enough to tolerate one Byzantine node using the one-hop MSR algorithms (\cite{usevitch2020resilient,rezaee2021resiliency}). 
Here, we assume that Byzantine node 5 sends two different oscillating $\hat{x}$ values to nodes in sets $\{1,3,6\}$ and $\{4\}$.
First, we apply Update Rule~\ref{updaterule2} with one-hop communication to the network in Fig.~\ref{9node}(a). The results in Fig.~\ref{9nodevalue}(a) show that resilient dynamic leader-follower consensus is not reached. For the one-hop algorithm to succeed, the network should have more connections as shown in Fig.~\ref{9node}(b).

\begin{figure}[t]
	\centering
	\subfigure[\scriptsize{With one-hop communication.}]{
		\includegraphics[width=2.7in]{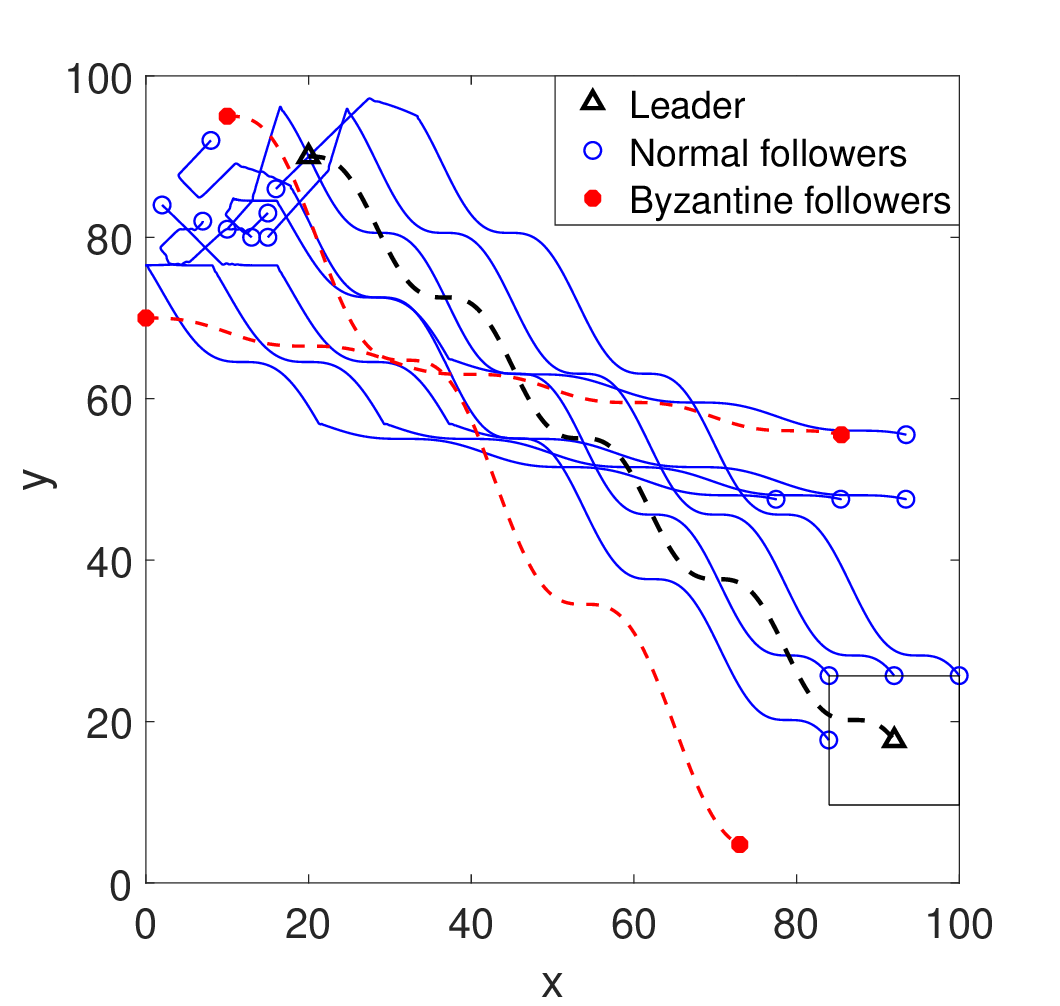}
	}
	\vspace{-9pt}
	
	\subfigure[\scriptsize{With two-hop communication.}]{
		\includegraphics[width=2.7in]{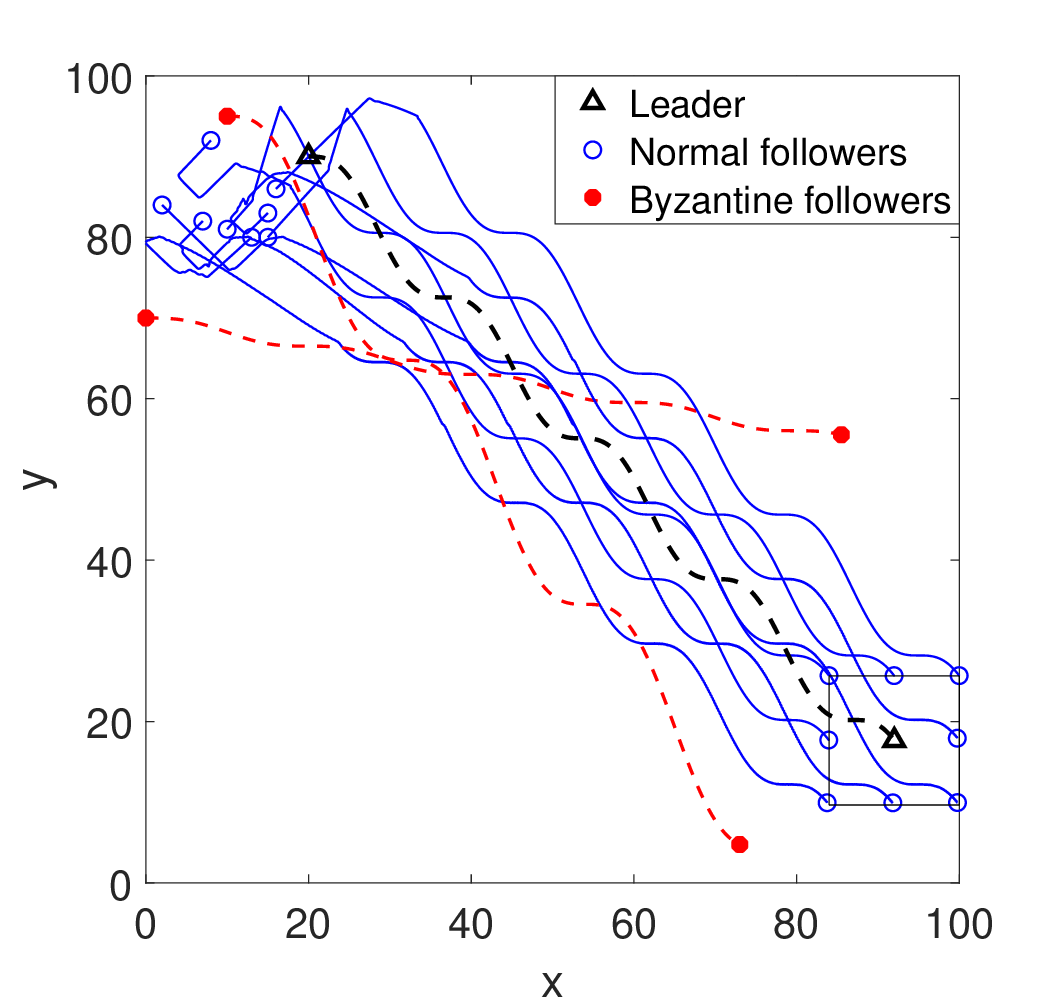}
	}
	\vspace{-9pt}
	\caption{Nodes' trajectories of the network in Fig.~\ref{9node}(a) applying Update Rule~\ref{updaterule2}.}
	\label{9nodeposition2d}
\end{figure}	

Alternatively, we can increase the network robustness by introducing multi-hop relays with the topology unchanged.
This time, we apply Update Rule~\ref{updaterule2} with two-hop communication to the network in Fig.~\ref{9node}(a), which is a $2$-robust following graph with $2$ hops.
As Theorem~\ref{theorem_secondorder} indicated, it can tolerate one Byzantine node. Here, node 5 relays false values in the same way as it manipulates its own. The results are given in Fig.~\ref{9nodevalue}(b). The maximum consensus error after 2.5 seconds is $\overline{\epsilon} =0.66$. Hence, resilient dynamic leader-follower consensus is achieved.


\textit{Scenario~2:}
Our approaches can be extended to decoupled multi-dimensional dynamics of agents. Consider the network in Fig.~\ref{9node}(a), each node $i\in \mathcal{V}$ is associated with two dimensions, i.e., $x_i[k]$ and $y_i[k]$. To be specific, each node $i \in \mathcal{V}$ exchanges $\hat{x}_i[k]$ and $\hat{y}_i[k]$ with neighbors at each time $k$, and we employ Update Rule~\ref{updaterule2} for each node $i\in \mathcal{W}^\mathcal{N}$ on each axis separately. The objective is that the normal followers track the leader in both dimensions to form a desired formation such that the consensus errors on both dimensions are ultimately uniformly bounded in a finite time, despite misbehaviors of the Byzantine node. We employ the same parameters as those in Scenario~1. Let initial values be $x_i[0]\in (0,20), y_i[0]\in (70,100), \forall i \in \mathcal{W}^\mathcal{N}$.

The desired formation is for normal nodes to form a square with the moving leader $d$ located at the square center. The control inputs of the leader on $x$-axis and $y$-axis are given as $u_{d_x}[k] = 0$ and $u_{d_y}[k] = -\sin( 0.6kT )$, respectively.
First, trajectories of normal followers applying the one-hop algorithm are presented in Fig.~\ref{9nodeposition2d}(a). Here, there are two red trajectories of node 5 since it sends two different positions to its neighbors. As a result, the followers are divided into two groups and the desired formation is not achieved. Next, the two-hop algorithm is applied in the same scenario and the results are presented in Fig.~\ref{9nodeposition2d}(b). There, normal agents successfully form the desired formation while tracking the leader. The maximum consensus errors on $x$ and $y$ axes after 5 seconds are $\overline{\epsilon}_x =0.41$ and $\overline{\epsilon}_y =0.49$, respectively. Thus, resilient dynamic leader-follower consensus is achieved on both dimensions. We have verified the efficacy of our methods.

\section{Conclusion}
We have investigated resilient dynamic leader-follower consensus in directed networks.
Our approaches are based on the MW-MSR algorithm from our previous work (\cite{yuan2021resilient}) studying leaderless resilient consensus; besides, the second approach is able to handle agents possessing second-order dynamics.
More importantly, we have characterized tight necessary and sufficient graph conditions for our algorithms. When we employ one-hop communication, our graph conditions are tighter than the ones for the case of insecure leaders (\cite{usevitch2020resilient}) and the case of the secure leader (\cite{rezaee2021resiliency}).
With multi-hop relays, we are able to obtain further relaxed graph requirements for the followers to track the dynamic leader. Moreover, our methods can achieve smaller consensus error bounds than the one in \cite{rezaee2021resiliency}. Possible future directions include to develop approaches for accelerating the MMC calculation and to design new dynamics of followers for reducing consensus errors.

\fontsize{8pt}{8.0pt}\selectfont

\end{document}